\documentclass[runningheads]{llncs}  

\usepackage{microtype}
\usepackage{graphics}
\graphicspath{{./figures/},{./figures/}}
%\graphicspath{{GDversion/figures/}}

\usepackage{authblk}

\usepackage{graphicx}
\usepackage{amsmath}
\usepackage{enumerate}
\usepackage{color}
\usepackage{xspace}
\usepackage{url}
%\usepackage{caption,subcaption}
%\usepackage{nccmath}

%%%%%%%%%%%%%%%%%%%%%%%%%%%%%%%%%%%%%%%%%%%%%%%%%%%% 
\usepackage{amsfonts}\usepackage{amssymb}
\usepackage{thmtools}\usepackage{mathscinet}
\usepackage{thm-restate}
    	\definecolor{darkgreen}{rgb}{0.01, 0.93, 0.29}
\definecolor{lightbrown}{rgb}{0.91, 0.4, 0.11}
\usepackage{framed}
\usepackage{bbding}

\newcommand{\ER}{\ensuremath{\exists \mathbb{R}}\xspace}
\newcommand{\etrinv}{\ensuremath{\textrm{ETR-INV}}\xspace}
\newcommand{\planetrinv}{\ensuremath{\textrm{Planar-ETR-INV}^{*}}\xspace}
\newcommand{\planetrinvOld}{\ensuremath{\textrm{Planar-ETR-INV}^{}}\xspace}
\def\ETR{\textsc{ETR}\xspace}
\newcommand{\DGP}{\ensuremath{\textsc{Graph in Polygon}}\xspace}
\newcommand{\PDE}{\ensuremath{\textsc{Partial  Drawing Extensibility}}\xspace}
\newcommand{\AGP}{\ensuremath{\textsc{Art Gallery Problem}}\xspace}
\newcommand{\anna}[1]{{}}   
 \newcommand{ \jyoti}[1]{\textcolor{lightbrown}{{\bf Jyoti: }#1}}

\newcommand{\remove}[1]{{}}
\newcommand{\changed}[1]{\textcolor{black}{#1}}
\newcommand{\finalchanged}[1]{{#1}}
% \newcommand{\attention}[1]{\textcolor{red}{#1}}

%we can now easily switch between these two.  I suggest the top version because it uses less space.  The alternative is to kick things to the Appendix.
\newcommand{\paragraphnew}[1]{\medskip\noindent{\bf#1}}

\title{The Complexity of Drawing a Graph in a Polygonal Region\thanks{This work was partially supported by NSERC and by ERC Consolidator Grant 615640-ForEFront.
A video explaining this work can be found at \texttt{https://youtu.be/JbmWLnY1hGk}.}}

\author{Anna Lubiw\inst{1}{\Envelope}  \and Tillmann Miltzow\inst{2} \and Debajyoti Mondal\inst{3}}
\authorrunning{Lubiw \and Miltzow \and Mondal} % abbreviated author list (for running head)

\institute{
Cheriton School of Computer Science, University of Waterloo, Waterloo, Canada  \email{alubiw@uwaterloo.ca}
\and 
Universit\'e libre de Bruxelles (ULB), Brussels, Belgium \email{t.miltzow@gmail.com}
\and 
Department of Computer Science, University of Saskatchewan, Saskatoon, Canada \email{dmondal@usask.ca}
}

\begin{document}

\maketitle
%{\color{red} Jyoti: We can copy-paste the main.tex arxiv version here, once we are done. I was just checking how the figures look like in lipics version.}
\begin{abstract}
%\anna{Shorter version:}
We prove that the following problem is complete for the 
existential theory of the reals:
%(ETR-hard): 
Given a planar graph and a polygonal region, with some vertices of the graph assigned to points on the boundary of the region, 
place the remaining vertices to create a planar straight-line drawing of the graph inside the region.
A special case is the problem of extending a partial planar graph drawing, which was proved  NP-hard by Patrignani.
%This 
%\attention{generalizes} %strengthens 
%an NP-hardness result by Patrignani on extending partial planar graph drawings.  
Our result is \changed{one of} the first showing that 
a problem of drawing planar graphs with straight-line edges is hard for the existential theory of the reals.
The complexity of the problem is open %in the case of %Anna: this is just to save a line
for a simply connected region.
%  \jyoti{J: the reviewer may disagree - is pseudoline  stretchability a planar graph drawing problem? A: I don't think so.  If the vertices are the intersection points, then the requirement is that various paths (of multiple edges) form straight lines.  That's not conventional graph drawings.}
 
We also show that, even for integer input coordinates, it is possible that drawing a graph in a polygonal region requires some vertices to be placed at irrational coordinates. 
By contrast, the coordinates are known to be bounded in the 
special case of a convex region, or for drawing a path in 
any polygonal region.
%\footnote{A short video explaining the content can be found here: \url{https://youtu.be/JbmWLnY1hGk}.}.

% A: please don't delete the following.  Maybe it will be useful somewhere else in the text.
%This gives a hierarchy of special cases:
%\anna{If this list seems too esoteric, we could drop it.}
%an unbounded region, where $\log n$ bit coordinates suffice; 
%a convex region, where the number of bits is polynomial in $\log n$, by Tutte's graph drawing algorithm;
%drawing a path in a general region, where the  number of bits is $n \log n$ (minimum link paths);
%to our most general case. 
%Many remaining versions are open, for example if the region is simply-connected.

%Both our results are based on recent similar results for the Art Gallery Problem.  %should we say this in the abstract? 

\remove{
\anna{Here's the old version}
    In the study of geometric problems, the complexity class \ER 
    turned out to play a  crucial role. It exhibits a deep 
    connection between purely geometric problems and real algebra, 
    and is sometimes referred to as the real analogue to the class NP. 
    While NP can be considered as a class of computational problems that
    deals with existentially quantified \emph{boolean} variables, \ER 
    deals with existentially quantified \emph{real} variables.
    We study the problem of drawing a given planar graph inside
    a given polygonal region. We draw vertices as points and edges as
    line segments. Some of the vertices may be vertices of the 
    polygonal region. We denote the problem as \DGP.
    We show \DGP is \ER-complete.
     \jyoti{J:I wanted to focus more on the problem rather than the complexity class. Not sure whether it would be better to motivate this in terms of `compatible triangulations'. Here is an alternative:}

     \jyoti{new text follows:}    
    Partial drawing extension is a well-studied variation of the classic planar straight-line drawing problem that takes a planar graph $G$ along with a partial drawing  of the graph as input, and asks to decide whether the given partial drawing can be extended to a planar straight-line drawing of $G$. Although the problem is known to be  NP-hard, there are interesting scenarios when the problem is solvable in polynomial time. One classic example goes back to 1963, when Tutte showed that for every    3-connected graph $G$, any drawing of its outer face as a convex polygon can be extended to a straight-line drawing of $G$. A rich body of literature attempts to generalize this result, but many natural questions such as whether the original problem is $NP$-complete, or the problem is polynomial-time solvable for a given non-convex  drawing of the outer face, are all open for over a decade. 
    
     In this paper we prove that the `degenerate' version of partial drawing extension is \ER-complete, even in the fixed embedding setting. More specifically, we show that given a partial drawing of a planar graph $G$ as a polygonal region, it is \ER-complete to decide whether the drawing can be extended to a degenerate straight-line drawing of $G$, i.e., we allow a face to be drawn as a degenerate polygon. \jyoti{we may need to fix this based on how we edit the intro.}  This is a significant step towards  understanding the complexity of the partial drawing extension problem,  suggesting a deep connection with real algebra. 
      \jyoti{new text ends.}  \jyoti{can we show that the problem remains hard even with convex drawing constraint?}    
} % end remove    
\end{abstract}

\section{Introduction}
\label{sec:Introduction}

\remove{  Anna: for now, I'm, leaving everything in place, just "removed".  We can clean up later.
\changed{\DGP is defined as the problem to draw a given graph $G$ 
crossing-free inside a given closed polygonal region $R$. 
Some of the vertices of $G$ might be fixed on the boundary of $R$. 
Degeneracies, like a vertex on an edge are allowed.
We show \DGP is complete for the complexity class \ER. That is 
the problem is equivalent to the existential theory of the reals.
}

\till{I think the reader really wants to know a precise definition of the problem as early as possible. I think stating our main result on
page $4$ is too late. Therefore I would like to have the above as a first paragraph.}
\anna{You are just repeating the Abstract.  I do not think we should repeat the Abstract in the first paragraph of the paper.  Rather, the Introduction should introduce the problem.  The problem is defined (in words) on page 2, after the stage is set --  which I think is early enough.}
} %end remove

%\anna{Here's my try on an Intro.  I'm changing the order, but I saved the original in "main-save-Jan-24.tex"}

There are many examples of structural results on graphs leading to beautiful and efficient geometric representations.  Two highlights are: Tutte's polynomial-time algorithm~\cite{tutte63} %that draws
to draw any 
3-connected planar graph with convex faces inside any fixed convex drawing of its outer face; and Schnyder's tree realizer result~\cite{schnyder1990embedding} that provides a %straight-line 
drawing of any $n$-vertex planar graph on an $n \times n$ grid.

On the other hand, there are geometric representations that are intractable, either in terms of the required coordinates or in terms of computation time. 
As an example of the former, a representation of a planar graph as touching disks 
(Koebe's theorem) is not always possible with rational numbers, nor even with roots of low-degree polynomials~\cite{Bannister2015galois}. 
As an example of the latter, Patrignani 
considered a generalization of Tutte's theorem and proved that it is NP-hard to decide 
whether a graph has a straight-line planar drawing 
%with fixed vertex positions~\cite{cabello2006planar}, or 
when part of the drawing is fixed~\cite{patrignani2006extending}.
He was unable to show that the problem lies in NP because of coordinate issues.  
\remove{  Anna: postponing definition of \PDE until after \DGP
\changed{Patrignani and the graph drawing community 
denotes this problem as \PDE. In that problem it 
is forbidden to have degeneracies, like a vertex to be on an edge
or overlapping edges in the final drawing.}
\jyoti{I would rather not refer to the `community'.. How about - In graph drawing this problem is referred to as \DGP, where it is forbidden ...}
\anna{I prefer not defining \PDE yet.  We can make a small caps definition of it if you want that, but it should come later.
Part of my point is that we have a legitimate, very interesting problem.  We do not need to define it as a "degenerate" version of a previous problem.  So, I want to state our problem first, and relate it to \PDE afterwards.}
} % end remove

This, and many other geometric problems, most naturally lie not in NP, but in a larger class, \ER, defined by 
\finalchanged{formulas in}
existentially quantified real (rather than Boolean) variables.
Showing that a geometric representation problem is complete for \ER is a stronger intractability result, often 
implying lower bounds on coordinate sizes.
%In recent decades many similar intractability results have been proved using 
%the existential theory of the reals, and its associated complexity class \ER, which lies between NP and PSPACE.
For example, McDiarmid and M\"uller~\cite{mcdiarmid2013integer} showed that deciding if a graph can be represented as intersecting disks is \ER-complete.
 The relaxation from touching disks (Koebe's theorem) to intersecting disks  implies %means 
 that disk centers and radii can be restricted to integers, 
 but McDiarmid and M\"uller show that an exponential number of bits may be required.

In this paper we prove that an extension of Tutte's problem 
%\changed{namely, to draw a graph in a fixed region,} 
is \ER-complete. 
%We call this the problem of \emph{Drawing a Graph in a Polygon}.
We call it the ``\DGP'' problem.  \changed{See Figure~\ref{fig:definition}.}
The input is a graph $G$ and a closed polygonal region $R$
(not necessarily simply connected), with some vertices of $G$ 
assigned fixed positions on the boundary of $R$.  
The question is whether $G$ has a straight-line planar drawing 
inside $R$ respecting the fixed vertices. 
We regard the region $R$ as a closed region which means that 
boundary points of $R$ may be used in the drawing.
\changed{A straight-line planar drawing (see Figure~\ref{fig:intro}(a,b)) means that %every vertex is represented as a distinct point, 
vertices are represented as distinct points, and every edge is represented as a straight-line segment joining its endpoints, and no two of the closed line segments intersect except at a common vertex.
(In particular, no vertex point may lie inside an edge segment, and no two segments may cross.)}
%(In particular, it is forbidden to have a vertex point lying on an edge segment, or to have two segments crossing.)  }
\remove{ Anna: comparison will come later
\changed{In contrast to \PDE, we allow degeneracies in our drawings.}
\anna{No, we don't.  The graph drawing is a proper non-degernate graph drawing.  We only allow that edges touch the boundary. No graph drawing person is going to have any interest in allowing degeneracies, so we really should not write that.} \jyoti{Although I was thinking of degeneracy initially, but this is a good point. May be we point out in the conclusion that one possibility is to introduce a threshold such that edges must respect that distance from the  polygonal vertices that are not incident to that edge.}
\anna{Sure, the threshold version is interesting.}
} % end remove
%\changed{It is a consequence of our proof that} A: I think we can leave this out here -- it is explained later.  

\begin{figure}[pt]
    \centering
    \includegraphics[width=.8\textwidth]
    % [width=.7\linewidth]
    {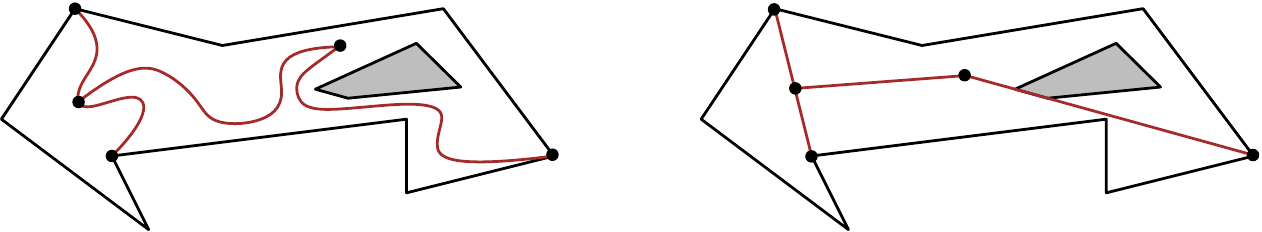}
    \caption{The \DGP problem.  Left: \changed{a polygonal region with one hole and a graph to be embedded inside the region.
    %the polygon. 
    The three vertices on the boundary are fixed; the others are free.}
    Right: a straight-line embedding of the graph in the region.
    %to the left. 
    %Note that a vertex of the polygon is allowed  to be on the red edge of the drawing.  Anna: just reversing active and passive roles.
    \changed{Note that we allow an edge of the drawing (in red) to include points of the region boundary.}
    %go along an edge of the region.}
    \anna{I shrunk the figure a bit, but there's no text involved, so hopefully that's ok.  Otherwise, unshrink.}
    }
    \label{fig:definition}
\end{figure}

Furthermore, we give a simple 
instance of \DGP with 
integer coordinates where a vertex of $G$ may need 
irrational coordinates in any solution, \changed{thus defeating the naive approach to placing the problem in NP}.
%\till{I guess the point here lies in the simplicity of the example.
%We easily get such instances also from our reduction.}  A: I added "simple" and added (in the relevant section) that the result follows from our proof.  OK?

%\jyoti{should we omit `many practical applications including'? one reviewer complained.}  Anna: I removed "many"
\changed{The \DGP problem is a very natural one
that arises in %many 
practical applications such as %including 
dynamic and incremental graph drawing.  
% \anna{Reference?  Maybe a chapter in the handbook, or the survey ed. by Hoffmann and Wagner?}
Questions of the coordinates (or grid size) required for straight-line planar drawings of graphs are fundamental and well-studied~\cite{STL-handbook}.
It is surprising that a problem as simple and natural as \DGP is so hard and requires irrational coordinates.  
} 
% \till{I think we are overselling our result here.}
% \anna{@Till: What do you think now?}
%\jyoti{I could not find appropriate references. This paper \cite{mchedlidze2016extending} also mentioned dynamic graph drawing without any references. Added~\cite{STL-handbook}.}

\changed{We state our results below, but first we give some background on existential theory of the reals and on relevant graph drawing results. 
In particular, 
we explain that 
our problem is \changed{a generalization of}
%related to, but subtly different, from   
the problem of extending a partial drawing of a planar graph to 
a straight-line drawing of the whole graph, called \PDE.  
See Figure~\ref{fig:intro}(c,d).}
\anna{If desired, we can define \PDE here or even explain the reduction here.}
%``partial drawing extensibility.''
%\anna{our problem really is a generalization, and it might be simpler to describe it that way.}
%\till{Sorry, I don't see that it is a generalization, see also my comment below. I guess we should discuss this.}
%\till{I also think that the definition of the problem both
%\DGP and \PDE should be explained early on. As a reader I find it pretty confusing otherwise.}

\begin{figure}[pt]
    \centering
    \includegraphics[width=\textwidth]{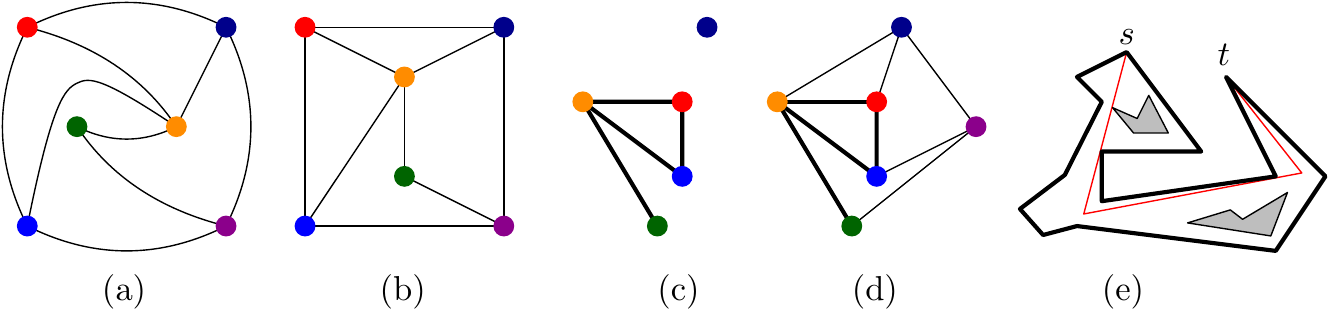}
    \caption{(a) A planar graph $G$. (b) A straight-line drawing of $G$. (c) A partial drawing $\Gamma$ of $G$. (d) Extension of $\Gamma$ to a straight-line drawing of $G$. (e) A minimum-link $s$-$t$ path in a polygonal region.
    % \till{Try to avoid scaling!} \jyoti{done, but this is the first time anyone asked for it - just curious- is there any interesting reason?}
    % \till{It mainly looks bad. If there is a small scaling, it is not so severe, but I often see in papers that they scale things so much that I have trouble reading the figure. We have plenty of space.}
    % \anna{There is a 12 page limit (+ references) for ICALP, so it would be better if the figure fits in one row.}
    %
    % \anna{It is confusing between (c) and (d) that bold edges become non-bold.  Maybe keep the same bold edges.  In either case, note that one edge is wrong (orange to green).}
    % \till{It seems fine to me now.} 
    %\jyoti{The last one may be omitted or mentioned as \DGP if we are not taling about minimum-link path in the intro.}
    }
    \label{fig:intro}
\end{figure}

\paragraphnew{Existential Theory of the Reals.}
% \anna{Did paragraph definition change?  If so, let's make our own -- we don't want subsubsection numbering.} 
% \till{The correct syntax to avoid numbering is \texttt{\textbackslash paragraph*$\{$title$\}$}}
In the study of geometric problems, the complexity class \ER 
plays a  crucial role, connecting purely geometric problems and real algebra.
% \anna{I'll send email to explain why I removed the term "real [number] analogue"}
%It is sometimes referred to as the real number analogue of NP\footnote{For a comparison of \ER with models of computing on real number see~\cite{SchaeferS17}};
% \till{I never heard "real number analogue". 
% Maybe remove number or find a source.}  \anna{You had written ``real analogue of NP''.  This can be parsed as ``real = true''.  My only intention was to disambiguate by adding ``number''.  Do you have a reference for the term ``real analogue of NP''?}\jyoti{I found both `real analogue' and `real number analogue' while searched in google books with `+ NP-hard'}
% \till{So the added number sounds a little strange to me but I see the point that not adding the number native speakers think real refers to true and not to real numbers. So I am happy to leave it.}
Whereas NP is defined in terms of 
\finalchanged{Boolean formulas in}
existentially quantified 
Boolean variables, \ER 
deals with \finalchanged{first-order formulas in}
existentially quantified real variables.  

Consider a 
%Many natural geometric problems can be expressed in terms of a 
first-order formula over the reals that contains only existential quantifiers,
%that consists of only existential quantifiers, %, e.g., the classic straight-line drawing problem.  \jyoti{is it obvious or we need a simpler example?}  
%Specifically, such a first-order formula can be written as 
 $\exists x_1,x_2,\ldots,x_n :$   
 $\Phi(x_1,x_2,\ldots,x_n)$, where $x_1,x_2,\ldots,x_n$ are real-valued variables and $\Phi$ is a quantifier-free formula involving equalities and inequalities of real polynomials. %The values of the variables that satisfy $\Phi$, determine a solution to the underlying geometric problem. 
The \textsc{Existential Theory of the Reals} (\ETR) problem takes such a formula as an input and asks whether it is satisfiable. The complexity class \ER consists of all problems that reduce in polynomial time to \ETR. 
Many problems in combinatorial geometry and geometric graph representation naturally lie in this class, and furthermore, many have been shown  
%graph drawing 
to be \ER-complete, e.g., stretchability of a pseudoline arrangement~\cite{matousekSegments,Mnev,SchaeferS17},  recognition of segment intersection graphs~\cite{KratochvilM94} and disk intersection graphs~\cite{mcdiarmid2013integer}, computing the rectilinear crossing number of a graph~\cite{Bienstock91}, etc. % \jyoti{J-to-Till - may be we should refer to the art gallery problem here.}
For surveys on \ER, see~\cite{Schaefer09,cardinal2015computational,matousekSegments}.
A recent proof that the \AGP is \ER-complete~\cite{AbrahamsenAM18STOC} provides the framework we follow in our proof.

\paragraphnew{Planar Graph Drawing.}
% \jyoti{Jyoti's new text follows:}
The field of Graph Drawing investigates many ways of representing graphs geometrically~\cite{NishizekiRahman}, but we 
focus on %will be concerned with 
the most basic representation of planar graphs, with points for vertices and straight-line segments for edges, such that 
segments intersect only at a common endpoint.
%no two segments intersect except at a common endpoint.  
By F\'ary's theorem~\cite{fary}, every planar graph admits such a straight-line planar drawing.

In Tutte's famous paper,
 ``How to Draw a Graph,'' he gave a polynomial time algorithm to find a straight-line planar drawing of a graph
 \finalchanged{by first augmenting to a 3-connected graph}.
%More generally, 
Given a combinatorial planar embedding 
(a specification of the faces) 
\finalchanged{of a 3-connected graph}
and given a convex polygon drawing of the outer face of the graph, his algorithm produces a planar straight-line drawing respecting both 
%The method is to augment to a 3-connected planar graph and then 
by reducing the problem to solving a linear system involving barycentric coordinates for each internal vertex. 
Tutte proved that the linear system has a unique solution and that the solution yields a drawing with convex faces. 
%(for a 3-connected graph).
The linear system can be solved in polynomial time. %, \changed{counting bit complexity}.
For a discussion of coordinate bit complexity see Section~\ref{sec:bits}.
%Appendix~\ref{sec:bits}.} \jyoti{I see it is  brought back in main text}
%Chambers et al.~\cite{ChambersEGL12} gave a different algorithm that 
%results in vertex coordinates of (how many?) bits, though they lost strict convexity of the faces. 
%\anna{I need to double-check this.} \jyoti{I think if the convex polygon coordinates are of linear size, then they can produce a drawing where each vertex is of size $O(n^{10})$. See Sec 3.5 of their arxiv. Details: Define resolution $d$ of a drawing to be the smallest vertex-vertex or vertex-nonAdjacentEdge distance, and diameter $D$ to be largest vertex-vertex distance. (I) Then they can produce a drawing with resolution at least $\Omega(d/n^3)$, where $d$ is the resolution of a complete graph drawn using the given convex polygon vertices. (II) if $p$ has integer coordinates with diameter $D$, then they can produce a drawing with diameter $O(D^4n^6)$ and integral coordinates.  }
%\anna{Thanks, Jyoti, that is helpful.  I'll move this to the later section.}

There is a rich literature on implications and variations of Tutte's result.  
We concentrate on the aspects of  
drawing a planar graph in a constrained region, or when part of the drawing is fixed. 
(We leave aside, for example, the issue of drawing graphs with convex faces, which also has an extensive literature.)
%\anna{Put more motivation here.}

Our focus will be on straight-line planar graph drawings, but it is worth mentioning that 
without the restriction to straight-line drawings,
%for general planar drawings
%with polygonal curves for edges %(with curved edges),} 
the problem of finding a planar drawing of a graph (with polygonal curves for edges) in a constrained region
is equivalent to the problem of extending a partial planar drawing (with polygonal curves for edges),
%partial drawing extensibility problem, 
and there is a polynomial time algorithm for the decision version of the problem~\cite{AngeliniBFJKPR15}. %\till{Also for straight lines, the two problems are the same. }  \anna{You mean for pseudo-line stretchability?  True enough, but that's a different problem and may just confuse matters to mention it here.  Also, there is no notion of ``fixing'' part of the representation.}
Furthermore, there is an algorithm to construct such a drawing in which each edge is represented by a polygonal curve with linearly many segments~\cite{chan2015drawing}.

\changed{For the rest of this paper we assume straight-line planar drawings, which makes the problems harder.
%The remainder of the paper is concerned with the harder versions of the problems for straight-line planar drawings.
The problem of drawing a graph in a constrained region is formalized as \DGP, defined above, and more precisely in Subsection~\ref{sec:contributions}.
The problem of finding a planar straight-line drawing of a graph after part of the drawing has been fixed
is called \PDE in the %graph drawing %this is just to save a line
literature\finalchanged{---its complexity}
%, and 
was 
formulated 
as an open question in~\cite{BrandenburgEGKLM03}.
}

%For the remainder of this paper we assume straight-line planar drawings, which makes the problems harder.
%\changed{The problem of finding a planar straight-line drawing of a graph after part of the drawing has been fixed is called ``partial drawing extensibility'' 
%in the graph drawing literature, and was formulated as an open question in~\cite{BrandenburgEGKLM03}.}
%\till{Maybe, we should elaborate a little bit here. What is the exact problem that they asked. If we answer a 12 year old open problem, we should highlight this more.} \anna{We don't answer an open problem---our problem is subtley different from partial drawing extensibility, see below.  Also, I wanted to just leave the definition as is here, because I first discuss different versions (whether the drawing must be straight-line or not.  We could add a more precise definition below.}
%\anna{delete the open/closed distinction}
%In this setting, drawing a planar graph in an open polygonal region is the same as the partial drawing extensibility problem, since we can regard the boundary of the region as the ``fixed'' part of the drawing.  
%However, for closed regions (which is our assumption) the problems differ because an edge may be drawn as a segment that touches, or lies on, the boundary of the region.

The relationship between the two problems is that \DGP generalizes \PDE, as we now argue. 
Given an instance of partial drawing extensibility, for graph $G$ with fixed subgraph $H$, we construct an instance of $\textsc{Graph}$ $\textsc{in}$ 
$\textsc{Polygon}$ by making a point hole for each vertex of $H$ \finalchanged{and assigning the vertex to the point}.
% a reviewer asks if the polygonal region is still closed.  Yes, it is, but, since the drawing must be simple, no feature except the vertex can use this point.
Then an edge of $H$ can only be drawn as a line segment joining its endpoints, so we have effectively fixed $H$. 
To complete the bounded region $R$, we enclose the point holes in a large box.   
Clearly, we now have an instance of \DGP, and that instance has a solution if and only if $G$ has a planar straight-line drawing that extends the drawing of $H$. 
There is no easy reduction in the other direction because \DGP involves a closed polygonal region, so an edge may be drawn as a segment that  
touches, or lies on, the boundary of the region, and it is not clear how to model this as \PDE.
However, the version of \DGP~for an open region is equivalent to \PDE. 
%\till{This is not strictly speaking a generalization. A is a generalization of B if there is a "trivial" reduction from A to B.
%But we don't know how to reduce \PDE to \DGP.}
%\anna{I think that what I wrote is such a reduction.  Maybe we need to skype about this.}

We now summarize results on \PDE, %partial drawing extensibility, 
beginning with positive results.
Besides Tutte's result that a convex drawing of the outer face can always be extended, there is a similar result for a star-shaped drawing of the outer face~\cite{HongN08}. 
There %also exist 
is also
a polynomial-time algorithm to decide the case when a convex drawing  of a subgraph %in the graph 
is fixed~\cite{mchedlidze2016extending}.   Urhausen~\cite{Urhausen} examined the case when a  star-shaped drawing of one cycle in the graph is given, and proved that there always exists an extension with at most one bend per edge. Gortler et al.~\cite{gortler2006discrete} gave an algorithm, extending Tutte's algorithm, that succeeds in some (not well-characterized) cases  for a simple non-convex drawing of the outer face.   
The \PDE problem was shown to be NP-hard by Patrignani~\cite{patrignani2006extending}. 
\changed{This implies that \DGP is NP-hard.}
%His proof also implies that the problem of drawing a planar graph in a polygonal region is NP-hard because, for the instances he constructs, the drawing extends if and only if there is a drawing of the graph in the (closed) polygonal region given by the fixed part of his drawing.
%\anna{Double-check this.} \jyoti{I think this is correct. Also, both (a) and `in some sense' (b) were pointed out as open questions in the conclusion of his journal paper. He also discussed little bit of precision issue.}
However, 
there are two natural questions about partial drawing extensibility that remain open: (a) does the problem belong  to the class NP \changed{(discussed in detail by Patrignani~\cite{patrignani2006extending})}, and (b) does the problem remain NP-hard when a combinatorial embedding of the graph is given and must be respected in the drawing.
%in the fixed embedding setting, i.e., when the input consists of a planar graph along with a combinatorial embedding, and the partial and extended drawings both must respect the given embedding.  
Our results shed light on these questions for the 
\changed{more general \DGP problem:}
%related problem of drawing a graph in a polygon: 
the problem cannot be shown to lie in NP by means of giving the vertex coordinates, and the problem is still \ER-hard when a combinatorial embedding of the graph is given.

%\anna{I left out Becker and Hotz~\cite{BeckerH87}.  How come they don't reference Tutte?  Their advantage over Tutte is minimizing sum of edge lengths.  Does this matter to us?}  

% cabello2006planar is about drawing a planar graph on fixed points (but assignment of vertices to points not given)
% durocher2012hardness follows up on that.

%Computing a topological  drawing of a planar graph on the Euclidean plane while optimizing several drawing aesthetics is a major area of research in graph drawing. \emph{Straight-line} and \emph{convex planar drawings}, i.e., when the edges are drawn as straight line segments and faces are drawn as convex polygons, are widely popular due to their clarity and readability properties. Every planar graph admits a straight-line drawing, but not necessarily a convex drawing. In 1960's,   Tutte  characterized the plane graphs that admit  convex drawings~\cite{tutte60}, and showed that every 3-connected planar graph admits a convex drawing inside a prescribed drawing of its outer face as a convex polygon~\cite{tutte63}. Tutte's work inspired a rich body of research on straight-line drawings under various constraints, such as for non-convex boundary~\cite{gortler2006discrete,HongN08},  convex inner faces~\cite{mchedlidze2016extending}, and  convex boundary along with edge length~\cite{BeckerH87} or area optimization~\cite{ChambersEGL12}. 

\remove{
\paragraph{Partial Drawing extensibility:} Tutte's result on 3-connected graph drawing~\cite{tutte63} can be viewed as one of the initial results on the \emph{partial drawing extensibility} problem that given a partial straight-line drawing of a planar graph, seek for an extension of the drawing to a complete drawing of the input graph. In addition to its theoretical appeal, the problem finds application in the context of dynamic and interactive graph drawing~\cite{mchedlidze2016extending}, 
 %drawing graph of higher-genus~\cite{DuncanGK09,ChambersEGL12}, and
 mesh generation~\cite{BrandenburgEGKLM03}, and computing compatible triangulations of  polygons~\cite{AronovSS93,LubiwM17}.  \jyoti{J-to-Anna, I am not sure whether it would help to move the focus to compatible triangulations, or adding more details of how we reached to G-in-poly.  I will leave that to you, feel free to rewrite as it suits best. There are also many results on simultaneous drawing extensions and other representation extensions. Should we mention it in 'further backgrounds'?}
 \anna{I think we should leave out discussion of compatible triangulations and also of simultaneous drawings.  It's too much!}

In 2004, the question of whether the partial drawing extensibility is polynomial-time solvable or not was posted in a collection of  interesting open problems in graph drawing~\cite{BrandenburgEGKLM03}. Patrignani~\cite{patrignani2006extending} showed that the problem is NP-hard\footnotetext{The conference version of the paper claimed NP-completeness}. However, there are two natural questions yet to resolve: (a) whether the problem belongs to the class NP, and (b) whether the problem remains NP-hard in the fixed embedding setting, i.e., when the input consists of a planar graph along with a combinatorial embedding, and the partial and extended drawings both must respect the given embedding. In the topological setting, i.e., when the edges are not required to be straight line segments, the partial drawing extensibility is  solvable in polynomial time~\cite{AngeliniBFJKPR15} both in the variable and fixed embedding settings.
}

%\anna{Maybe we should defer details on min link til later.  Here's a short version.}
%In addition to 
Besides Tutte's result, there is another special case 
%There is a 
of \DGP
%our problem of drawing a planar graph in a given region 
that is well-solved, namely when the graph is just a path with its two endpoints $s$ and $t$ fixed on the boundary of the region.
\changed{See Figure~\ref{fig:intro}(e).}
%In this special case, the 
This problem is equivalent to the {\sc Minimum Link Path} problem\finalchanged{---to find a path from $s$ to $t$ inside the region with a minimum number of segments.  This is because}  a path of $k$ edges can be drawn inside the region if and only if the minimum link distance between $s$ and $t$ is less than or equal to $k$.  Minimum link paths in a polygonal region can be found in polynomial time~\cite{mitchell1992minimum}, and in linear time for a simple polygon~\cite{suri86}.
\changed{The complexity of the coordinates is well-understood (see Section~\ref{sec:bits}).} 
%There is a tight $\Theta(n \log n)$ bound on the 
%number of bits required to represent each bend point of a minimum link path~\cite{kostitsyna2016complexity,kahan1999bit}.
%We will discuss bit complexity further in Section~\ref{sec:bits}.
%\till{I spoke with Irina yesterday and she said that the upper bound on the bit-complexity was already known. Their contribution was a corresponding lower bound.}
%\anna{Yes, thanks, I was going to refer to Snoeyink and Kahan in the more detailed part later on.} 

%\till{I really like the related work section, however, I think we should emphasize some {\em motivation}. Why is it interesting? 
%Natural Question; Architect Metaphor; Dynamic Graph Drawing;
%First straight-line problem; techniques can extend to more 
%such drawing problems. examples?}
%\anna{put some motivation in the beginning section}

%\till{There is a survey from Ferran Hurtado and Csaba Toth about Graph Augmentation. \url{https://link.springer.com/chapter/10.1007/978-1-4614-0110-0_17}. I feel this is helping us a lot to motivate our result.}
%\anna{That's quite a different problem: they start with a graph drawing and ask to augment it -- freely choose new edges to add to obtain some property.  Whereas we have a given graph that we want to draw.}

\remove{
\paragraph{Minimum-Link Paths:}  The partial drawing extensibility problem shares some striking similarities with the \emph{minimum-link path} problem. The input of the  minimum-link path problem is a \emph{polygonal region} $P$ (i.e., a polygon with holes) along with a pair of points $s,t$ inside $P$, and the goal is to compute a polygonal path with minimum number of links. The decision version of this problem seems to be a special case of the  partial drawing extensibility problem, where the polygonal domain plays the role of a partial drawing, and the vertices and edges that are not yet drawn is a path between $s$ and $t$. 

Kostitsyna~\cite{kostitsyna2016complexity} showed that the minimum-link path problem is NP-hard by reducing the 2-partition problem. 
\anna{The min-link path problem can be solved in poly. time.  It only gets hard in higher dimensions, or \sout{with reflections allowed} or with bends restricted to edges.  Please double-check.} \jyoti{J: we are talking about polygonal regions (not simple polygons), that is why it is hard.}
\anna{What Kostitsyna et al. prove NP-hard is MinLinkPath(1,2), where the bends are restricted to edges.  Min link path in a polygonal domain is what they call  MinLinkPath(2,2) and is solved in nearly quadratic time in their ref. [41] by Mitchell, Rote, Woeginger, 1992, ``Minimum-link paths among obstacles in the plane''.}
However, this does not subsumes Patrignani's~\cite{patrignani2006extending} result since in the minimum-link path model, the vertices are allowed to lie on the edges, which forbidden in the original partial drawing extensibility model.  Kostitsyna's~\cite{kostitsyna2016complexity} reduction uses `0-width slits', which corresponds to vertex overlaps, and like Patrignani's~\cite{patrignani2006extending} result, the hardness result does not apply to the fixed embedding setting. When restricted to simple polygons (i.e., without holes), the minimum-link path problems can be solved in polynomial time~\cite{suri86}.
}

\subsection{Our Contributions}
\label{sec:contributions}
Our problem is defined as follows.
\FrameSep5pt 
\begin{framed}
\noindent {\bf \DGP}  % \jyoti{slightly changed the definition} 
% \anna{This name is a bit more accurate, but longer.  Thoughts?} \till{Names do not need to be 100\% precise. The point of a name is to be easily referable. The advantage of being short is that we can use it in the text and it gives a nice flow of reading. When reading the name, the reader should immediately know what we are taking about. That is why I do not like abbreviations like DGP in the text as, it is a burden to the reader. But a too long name is also a burden to the reader as it interrupts the flow of reading. Another name that I heard once is Partial Drawing extention}
% \anna{Well, I agree with all you say (except that Partial Drawing Extensibility is a different problem).  The question I meant was: should we stick with your original name ``Drawing Graphs in Polygons'', or with the slightly longer version above.  I prefer the new one with singular ``graph''.  (like we don't talk about the maximum cliqueS problem).}
% \jyoti{I am fine with any of those, and even shorter names such as `Drawing-in-Polygon'.
% \till{I like the name Jyoti suggested. It is pretty short and still conveys enough, to be disambiguous.}
% \anna{We're submitting to ICALP so we cannot assume "Drawing" will be understood to be about Graphs.  We should definitly keep Graphs in the name.}

\noindent {\bf Input:} A planar graph $G$ and a polygonal region $R$ 
with some vertices of $G$ assigned to fixed positions on the boundary of $R$.
%that may contain a subset of vertices of $G$ on its boundary. %A polygonal region $R$ and a graph $G$ with vertices shared with the polygonal region $R$.

\noindent {\bf Question:} Does $G$ admit a planar straight-line drawing inside $R$ respecting the fixed vertices? %Is there a straight line, non-crossing drawing of $G$ inside $R$?
\end{framed}

The graph may be given abstractly, or via a \emph{combinatorial embedding}  % (``fixed embedding setting'').
which specifies the cyclic order of edges around each vertex, thus determining the faces of the embedding.
When a combinatorial embedding is specified then the final drawing must respect that embedding.
%\anna{Note that graph drawing literature says ``planar embedding'' but that does not match the wider math community, see https://en.wikipedia.org/wiki/Graph_embedding}

Note that we regard $R$ as a closed region.  Thus, points on the boundary of $R$ may be used in the drawing of $G$.  In particular, an edge of $G$ may be drawn as a segment that touches, or lies on, the boundary of $R$.  See Figure~\ref{fig:definition}.  
%This makes our problem more general than the partial drawing extensibility problem.  Anna: we've said this earlier
Note that we still require the 
drawing of $G$ to be ``simple'' in the conventional sense that no two edge segments may intersect except at a common endpoint.  

Our first result is that
%We begin by proving that 
solutions to \DGP may involve irrational points.  This will in fact follow from the proof of our main hardness result, but it is worth seeing a simple example.

\begin{restatable}{theorem}{IrrationalVertices}
\label{thm:IrrationalVertices}
 There is an instance of \DGP %graph embedding inside a polygon,
 with all coordinates given by integers,
 in which some vertices need irrational coordinates.
% which needs irrational coordinates.
\end{restatable}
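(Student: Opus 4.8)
The plan is to exhibit an explicit small instance and to argue that the coupling between free vertices produced by a \emph{cycle} in $G$ forces a quadratic constraint with no rational solution. I would first record why a nontrivial construction is needed: if a single vertex $v$ is free while all other vertices are already placed at rational points, then every constraint on $v$---that $v$ lie in $R$, and that each edge incident to $v$ neither leave $R$ nor cross an (already placed) edge---is a \emph{linear} inequality in the coordinates of $v$, since every other object is fixed. Hence the feasible region for $v$ is a rational polygon and a rational placement always exists. So irrationality can only be forced by at least two free vertices whose constraints couple \emph{bilinearly}. This also explains why a path cannot force irrationality (one places its bends one at a time in rational feasible regions), matching the known rational coordinate bounds for minimum-link paths, whereas a cycle can.

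Accordingly I would take $G$ to contain a short cycle of free vertices, the simplest being a triangle $p,q,r$, together with a polygonal region $R$ with integer coordinates designed so that (i) thin corridors confine each of $p,q,r$ to a segment of rational slope (its \emph{track}), and (ii) each of the three triangle edges is forced to pass exactly through a prescribed reflex corner $V_1,V_2,V_3$ of $R$ (again at integer coordinates). Constraint (ii) is legitimate precisely because $R$ is closed and, as stated in the problem definition, an edge of the drawing may touch or lie on the boundary of $R$: routing the edge around the reflex corner on the wrong side would push it out of $R$ or across another edge, so in any valid drawing the edge \emph{grazes} the corner, giving an exact collinearity.

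Writing $p,q,r$ by their track parameters, constraint (i) makes each coordinate a rational affine function of one parameter, while each grazing collinearity ``edge $pq$ through $V_1$'' is a bilinear equation in the two relevant parameters. The three bilinear equations in three parameters form a zero-dimensional system; eliminating two of the parameters yields a single univariate quadratic, and the integer data can be chosen (I would aim for the normalized form $x^2=2$) so that this quadratic has a real root inside the admissible range of a track but no rational root. Consequently every valid drawing places the corresponding vertex at an irrational point, while the irrational root itself yields a genuine planar straight-line drawing inside $R$: one checks that the triangle is nondegenerate, that the three edges meet only at shared endpoints, and that everything stays inside $R$.

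The main obstacle is the \emph{lower bound}---proving that the grazing constraints are truly forced and that the feasible set collapses to the isolated (irrational) solutions, rather than merely exhibiting one irrational drawing. Concretely, I must design the corridors and reflex corners so that there is no slack: any perturbation of an edge off its reflex corner either exits $R$ or creates a crossing, and the three tracks together with the three grazings leave no one-parameter family of drawings (which, lying on a rational curve, would contain a rational point). Verifying non-crossing and containment at the solution, and ruling out alternative combinatorial routings of the three edges (on the other sides of the corners) that might admit rational placements, is the delicate part; once the geometry is pinned to the quadratic, irrationality follows immediately from the absence of rational roots.
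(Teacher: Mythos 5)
Your high-level framing is sound---the opening observation that a lone free vertex faces only \emph{linear} constraints (hence a rational feasible polygon), so that irrationality must come from bilinearly coupled free vertices on a cycle, is correct and well motivated---but the proposal has a genuine gap: it never exhibits the instance, and the step you yourself flag as ``the delicate part'' is the entire content of the theorem. Two concrete problems. First, forcing an exact grazing collinearity is harder than you suggest: an edge passing a reflex corner of $R$ naturally yields a \emph{one-sided} inequality (the segment must stay on the region side of the corner), not an equality; to pin the edge to the corner you need an opposing mechanism pressing from the other side, and with three triangle edges each needing two-sided pinning while the three vertices slide on tracks, it is not at all clear that integer coordinates exist for which the feasible set collapses to isolated irrational solutions. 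Designing such coordinates and proving uniqueness is precisely the hard analysis, and the plan defers it entirely. Second, your fallback for excluding a one-parameter family of drawings is wrong as stated: a curve defined over $\mathbb{Q}$ need not contain a rational point (the conic $x^2+y^2=3$ has none), so ``lying on a rational curve, would contain a rational point'' does not follow; you would have to rule out positive-dimensional components of the solution set directly.

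The paper sidesteps exactly this difficulty by not building the forcing machinery from scratch: it adapts the Abrahamsen--Adamaszek--Miltzow art-gallery polygon, whose \emph{unique} optimal guarding at irrational points is already established. Guard segments become variable segments (a free vertex is forced onto each segment by notches containing fixed vertices), guards become free vertices, and ``guards $x,y$ jointly see a pocket edge'' becomes ``free vertices $x',y'$ are both adjacent to a common vertex $a'$ on a pocket variable segment,'' with a hole keeping the two resulting cycles planar. The extremal coverage constraints---which provide the two-sided pinning your triangle construction lacks, since the two visible intervals on a pocket edge must together cover it and only just touch at the irrational optimum---are inherited wholesale, and irrationality follows from the cited uniqueness result. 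To complete your route you would in effect be re-proving an analogue of that uniqueness theorem for your configuration, realizing each collinearity as the touching point of two opposing one-sided constraints; that is plausible in principle, but it is the bulk of the proof rather than a final verification.
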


% \begin{figure}
%     \centering
%     % \includegraphics[width=\textwidth]{}
%     \caption{IrrationalGuardsAdopted \till{I have the original figures of the other paper, if needed.} }
%     \label{fig:my_label}
% \end{figure}

Note that the theorem does not rule out membership of the problem in NP, since it may be possible to demonstrate that a graph can be drawn in a region without giving explicit vertex coordinates.  
%This however does not rule out NP-membership  \jyoti{of \DGP} as such as 
%we do not need to give the coordinates explicitly. 
%Its value lies in its simplicity. It can quickly be checked 
%and is self-contained.  
We prove Theorem~\ref{thm:IrrationalVertices} by adapting an example from Abrahamsen, Adamaszek and Miltzow~\cite{abrahamsen2017irrational}
that proves a similar irrationality result for the \finalchanged{$\AGP$}. %Art Gallery problem.
%, to show that the solution might require irrational coordinates.
Further discussion of bit complexity for special cases of the problem can be found in Section~\ref{sec:bits}.

%The main result of this paper is the following:
Our main result is
the following, which holds whether the graph is given abstractly or via a combinatorial embedding.

\begin{restatable}{theorem}{MainER}\label{thm:Main}
 %Graph embedding inside a polygonal region
 \DGP is \ER-complete.
\end{restatable}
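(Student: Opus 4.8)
The plan is to prove \ER-completeness in two parts: membership in \ER, and \ER-hardness.

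\medskip

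\noindent\textbf{Membership in \ER.} I would express the existence of a valid drawing as an \ETR formula. The free variables are the coordinates $(x_v, y_v)$ of the unfixed vertices of $G$; the fixed vertices contribute rational constants. I need to encode three types of constraints as polynomial (in)equalities. First, each vertex must lie inside the closed region $R$: triangulating $R$ (or expressing it as a Boolean combination of half-plane constraints coming from the edges of $R$) gives a quantifier-free formula asserting $(x_v,y_v)\in R$. Second, each edge segment must stay inside $R$; since $R$ is a union of convex pieces this is again expressible by polynomial inequalities, or one can assert non-crossing of each edge with each boundary edge of $R$ together with an interior-point test. Third, the drawing must be simple: for every pair of independent edges I assert that their open segments do not intersect, and for every vertex--edge pair (vertex not an endpoint) I assert the vertex does not lie on the segment. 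Non-intersection and non-incidence of segments are standard sign-condition tests on $2\times 2$ determinants (orientations), all polynomial of bounded degree, so the whole formula has polynomial size and membership in \ER follows. If a combinatorial embedding is prescribed, I additionally encode the required cyclic order around each vertex via orientation predicates, which are again polynomial.

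\medskip

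\noindent\textbf{\ER-hardness.} This is the main work and the expected obstacle. The excerpt tells us to follow the framework of Abrahamsen, Adamaszek and Miltzow's proof that the \AGP is \ER-complete~\cite{AbrahamsenAM18STOC}, and to reduce from a constrained variant of \ETR, namely \etrinv (and its planar form \planetrinv). The idea is to build, for a given \etrinv instance, a planar graph $G$ together with a polygonal region $R$ whose geometry forces the coordinates of certain ``variable'' vertices to encode real values, in such a way that a valid drawing exists if and only if the \etrinv formula is satisfiable. The core is to design gadgets: (i) a \emph{variable gadget} in which a free vertex is constrained to slide along a segment, its position along the segment representing a real value $x_i\in[1/2,2]$ (or a comparable interval); (ii) gadgets realizing the \etrinv operations---addition $x+y=z$, and inversion $x\cdot y = 1$---as forced geometric relations among slider positions, obtained by threading edges through narrow corridors and point-holes so that similar-triangle or collinearity relations translate algebraically into the desired equation; and (iii) \emph{copy/wire} gadgets that transmit a value from one place to another and fan it out. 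The planarity of \planetrinv is what lets us lay these gadgets out in the plane without the wires crossing, matching the planar straight-line constraint of \DGP.

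\medskip

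The hardest step will be constructing the inversion (or multiplication) gadget and proving it correct, since nonlinearity is exactly what pushes the problem beyond NP into \ER; linear gadgets alone would only give a polynomial-time or NP-type problem. Concretely, I would exploit the closedness of $R$: because edges may run along the boundary and through point-holes, I can force collinearity and intersection conditions that pin a slider to an algebraic function of other sliders. The correctness argument has two directions---(soundness) any drawing respects the encoded equations, so a drawing yields a satisfying assignment; and (completeness) any satisfying assignment can be realized as an actual non-crossing straight-line drawing inside $R$. The completeness direction typically requires a \emph{perturbation/robustness} analysis showing the gadgets have enough slack that a feasible algebraic solution extends to a genuine planar drawing, and showing the whole construction has polynomial bit-size. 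Finally, I would note that the same construction works in the fixed-embedding setting by prescribing the combinatorial embedding induced by the intended drawing, so the theorem holds whether $G$ is given abstractly or with a combinatorial embedding. The irrationality claim of Theorem~\ref{thm:IrrationalVertices} then drops out of the inversion gadget, since inversion forces quadratic-irrational coordinates on a suitable instance.
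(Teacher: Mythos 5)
Your high-level plan is the paper's plan: membership by encoding the drawing as an \ETR{} formula (the paper spends one sentence on this; your triangulation and orientation-predicate encoding is a reasonable expansion), and hardness by reduction from \planetrinv in the style of the \AGP proof of~\cite{AbrahamsenAM18STOC}, with sliders on variable-segments, copy/splitter wiring laid out along a planar rectilinear drawing of the variable-constraint incidence graph, and geometric addition and inversion gadgets. Two concrete ideas are missing, however, and both are load-bearing in the actual proof.

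First, you describe gadgets that force the \emph{equalities} $x+y=z$ and $x\cdot y=1$ exactly, by pinning edges to collinearity conditions through the closed boundary. The paper never forces an edge through an exact point: every gadget enforces a single \emph{one-sided inequality}, namely that an edge passes on one side of a full-dimensional hole (Figures~\ref{fig:FullCopy}, \ref{fig:Addition}, \ref{fig:inversion}). This is exactly why the reduction starts from \planetrinv, whose constraints are the inequalities $x+y\le z$, $x+y\ge z$, $x\cdot y\le 1$, $x\cdot y\ge 1$ of Definition~\ref{def:planaretrinv}, and why its \ER-completeness needs a separate argument (Theorem~\ref{thm:planarETRINV}): naively splitting equalities into inequality pairs destroys planarity, which is repaired in Appendix~\ref{sec:PlanarETR} via half-crossing gadgets built from inequalities only. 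Equalities, needed for copying values, are then synthesized by chaining four one-sided gadgets ($x\le z_1\le y$ and $x\ge z_2\ge y$). A side benefit you miss is that this makes your ``perturbation/robustness'' step unnecessary: a satisfying assignment yields exact slider positions, and the drawing is planar by construction of the layout.

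Second, your wiring inventory lacks the \emph{turn gadget}, and it cannot be waved into ``copy/wire gadgets'': the copy gadget only relates parallel variable-segments, while the inversion gadget inherently needs one horizontal and one vertical segment, and the rectilinear layout has $90^\circ$ corners. Turning is genuinely nontrivial---the paper does not know how to enforce $x\le z$ directly between a vertical and a diagonal segment, and instead enforces $x\le f(z)$ for an \emph{unknown} monotone function $f$, cancelling $f$ by a mirror-image copy of the same gadget enforcing $y\ge f(z)$, with four such gadgets yielding $x=y$ (Figure~\ref{fig:turn}). Without this (or an equivalent mechanism), your construction cannot route values around corners, and the global assembly of Figure~\ref{fig:reduction} does not go through. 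Your closing remark on Theorem~\ref{thm:IrrationalVertices} is also slightly off: the paper derives its explicit irrational example by adapting the irrational-guards polygon of Abrahamsen et al., not from the inversion gadget, although irrationality does also follow abstractly from the main reduction.
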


\remove{  Anna: I'm removing this for now.  Let's save it for the final version of the paper.
\till{Do we want to make statements about universality?} 
 %\jyoti{may be --- are there well known related problems for which the universality version is \ER-hard?}
 %\jyoti{I moved the conclusion to the end of the paper.}
\till{With Universality i mean: For every semi-algebraic set there exists an instance of \DGP, such that there solution space is equivalent in some specified topological  and algebraic sense. As a corollary, we can enforce \emph{every} algebraic number as the $x$ coordinate of a valid solution.
Would that be something interesting to state?} %\jyoti{well, it sounds something interesting to me. I am not sure whether this is our contribution or observation from the early papers, if we are showing this, then we can mention that here, otherwise, maybe in the conclusion section. Anna may have some other opinion.}
\anna{If universality is true and follows readily from our proof, it would be good to say so.  However, maybe it isn't the first priority (it will not change whether our paper is accepted or not).  Also, I thought you could not get universality for the Art Gallery?  Maybe I'm wrong about that?  Or maybe this problem is different in that respect?}
\till{Anna, you are right, we could not get universality in the usual sense, as guards are unlabelled. Vertices however are labeled, so we get all the universality we want very easily, for our paper. :)}
}

%\paragraphnew{Proof technique.}
We prove Theorem~\ref{thm:Main} using a reduction from 
%\noindent{\bf Proof technique.}
%The idea for proving our main theorem
%is to use a reduction from 
a problem called \etrinv which was introduced and proved \ER-complete by Abrahamsen, Adamaszek and Miltzow~\cite{AbrahamsenAM18STOC}.
% \etrinv as introduced by Abrahamsen, Adamaszek and
% Miltzow~\cite{AbrahamsenAM18STOC}.
% They also showed that \etrinv is \ER-complete.

\begin{definition}[\etrinv]
\label{def:etrinv}
In the problem $\etrinv$, we are given a set of real 
variables $\{x_1,\ldots,x_n\}$, and a set of equations of the form
$
x=1,\quad x+y=z,\quad x\cdot y=1,
$
for $x,y,z \in \{x_1, \ldots, x_n\}$.
The goal is to decide whether the system of equations has 
a solution when each variable is restricted to the range $[1/2,2]$.
\end{definition}

Reducing from \etrinv, rather than from ETR, has several crucial advantages.  
First, we can assume that all variables are in the range $[1/2,2]$.
Second, we do not have to implement a gadget that simulates 
multiplication, but only inversion, i.e., $x\cdot y = 1$.
For our purpose of reducing to \DGP, we will find it useful to
further modify \etrinv to avoid equality and to ensure planarity of the variable-constraint incidence graph, as follows:

\remove{
Dobbins, Kleist, Miltzow and Rz{\polhk{a}}{\.{z}}ewski~\cite{AreaUni}
\changed{strengthened the result to show}
%showed 
\ER-hardness for the case that the variable-constraint incidence graph 
is planar. We use here a similar, but
slightly different idea.
We will later explain the subtle difference.
% The reason, we cannot use their result is that
% we do not know of a gadget that simulates addition.
\jyoti{This is confusing in the sense that we have an addition gadget, as in Fig 12. Why we could not use their transformation and replace that with our addition gadget? In the addition gadget section we mentioned something like `single' gadget. Is it hard to compare without knowing Dobbin et al.'s result in detail? May be we need a better explanation to motivate the new \planetrinv proof.}
\till{I cannot explain the difference before we defined
out planaretrinv.}
} % end remove

%As a technical contribution, we show 
%\changed{that the following problem is \ER-complete.}
%the following theorem.

\begin{definition}[\planetrinv]
\label{def:planaretrinv} In the problem \planetrinv, we are given a set of real variables
 $\{x_1,\ldots, x_n\}$, and a set of equations and inequalities of the form
$
x = 1\text{, } x + y \le z\text{, } x + y \ge z\text{, } x \cdot y \le 1\text{, } x \cdot y \ge 1\text{, }
\text{ for } x, y, z \in \{x_1,\ldots, x_n\}.
$
%  \changed{Each variable is assigned an interval; either  $[1/2,2]$ or $[1,4]$.} 
 Furthermore, we require planarity of the \emph{variable-constraint incidence graph}, which is the bipartite graph that has a vertex for every variable and every constraint and an edge when a variable appears in a constraint. 
 %We require this graph to be planar. 
 The goal is to decide whether the system of equations has a solution when each variable is restricted to lie in $[1/2,4]$. 
%\till{Yes, we do need a larger range. I missed that before. The reason is that the new variable $z$ in the crossing gadget can be in the range $[1,4]$.} %\jyoti{modified the definition accordingly.}
%\till{We can assign every variable the same interval and
%for the reason of simplicity, we should do this.
%The reason for this is not clear at this point. Please just trust me on this, I will explain it inside the proof!}
\end{definition}

% A: removed the following sentences since they appear above already.
%Reducing from ETR-INV has several crucial advantages. First, we can assume that all variables are in
%the range [1/2, 2]. Second, we do not have to implement a gadget that simulates multiplication, but only
%inversion, i.e., $x \cdot y = 1$. 

As a technical contribution, we prove the following. 
\begin{restatable}{theorem}{PlanarETRINV}\label{thm:planarETRINV}
 \planetrinv is \ER-complete.
\end{restatable}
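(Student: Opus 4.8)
Membership in \ER is immediate: a \planetrinv instance is itself a system of polynomial equalities and inequalities together with the bound constraints $1/2 \le x_i \le 4$, so it can be written as an \ETR formula of polynomial size. The substance of the theorem is \ER-hardness, which I would establish by reducing from \etrinv. The reduction has two stages: first removing the equalities, then enforcing planarity of the variable-constraint incidence graph.

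The first stage is routine. I replace each equality $x+y=z$ by the pair $x+y\le z$, $x+y\ge z$, and each equality $x\cdot y=1$ by the pair $x\cdot y\le 1$, $x\cdot y\ge 1$. This leaves the solution set unchanged and uses only the constraint types allowed in \planetrinv. In the incidence graph a single degree-three constraint vertex is replaced by two vertices adjacent to the same neighbours, forming a $K_{2,3}$, which is planar, so this step does not obstruct the later planarization; all original variables remain in $[1/2,2]\subseteq[1/2,4]$.

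The second stage is the heart of the proof and uses two gadgets. A \emph{copy gadget} duplicates a variable $x$ into a new variable $x'$ with $x'=x$ by double inversion: introduce $y$ with $x\cdot y=1$ and $x'$ with $x'\cdot y=1$, each product realized by the two inequalities above, so that $x'=1/y=x$; if $x\in[1/2,2]$ then $y=1/x\in[1/2,2]$ as well, so the range is respected. Concatenating copy gadgets yields a planar \emph{transmission chain} carrying the value of a variable along a path of any length. A \emph{crossover gadget} lets two chains exchange sides: given inputs $a,b$ and outputs $a',b'$, I introduce a single auxiliary $s$ and impose the three additions $a+b=s$, $a'+b=s$, $a+b'=s$ (each again a pair of inequalities). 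The last two force $a'=a$ and $b'=b$, while $s=a+b$ ranges over $[1,4]$, which is precisely why the admissible range must be widened to $[1/2,4]$. The incidence graph of this gadget is planar, and after suppressing the degree-two terminals one checks that the outer face can be chosen so that the two equal pairs $\{a,a'\}$ and $\{b,b'\}$ interleave around it; this is exactly the configuration needed to replace a crossing.

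With these gadgets in hand I would finish as follows. Draw the incidence graph of the inequality instance in the plane with an arbitrary finite number of edge crossings, realize every edge as a transmission chain so that each constraint is attached only to local copies of its variables, tie all copies of a given variable together with copy gadgets, and replace each crossing by the crossover gadget. The result is a \planetrinv instance whose incidence graph is planar, whose variables all lie in $[1/2,4]$, and which is satisfiable if and only if the original \etrinv instance is; combined with membership this gives \ER-completeness. I expect the crossover construction to be the main obstacle: one must transmit two independent values \emph{across} one another using only the permitted ternary additions and binary inversions, keep every auxiliary variable within range, and guarantee planarity with all four terminals accessible on the outer face in interleaved order. The overall copy-and-crossover planarization is similar in spirit to earlier planar variants of \etrinv, and the bookkeeping needed to turn a crossed drawing into a correct planar instance, while standard, must be carried out with care.
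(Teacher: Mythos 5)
Your high-level plan is the same as the paper's (membership is immediate; reduce from \etrinv; split each equality into a pair of inequalities; planarize the variable-constraint graph with copy and crossover gadgets, widening the range to $[1/2,4]$), and your inversion-based copy gadget is sound: each $K_{2,2}$ block has both terminals on its outer face, so chains of them form planar wires. The genuine gap is your crossover gadget. Each of the three additions $a+b=s$, $a'+b=s$, $a+b'=s$ must be realized as a \emph{pair} of constraint vertices with identical neighborhoods, so each contributes a $K_{2,3}$, i.e.\ a theta graph, to the incidence graph. Take the theta $\Theta_1$ on $\{C_1,C_1'\}\times\{a,b,s\}$ coming from $a+b=s$. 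In any planar embedding, $\Theta_1$ has exactly three faces, and the boundary of each contains only \emph{two} of $a,b,s$. The piece arising from $a'+b=s$ attaches to the rest of the gadget only at $b$ and $s$, so it---and in particular the terminal $a'$---must be drawn strictly inside the unique face of $\Theta_1$ whose boundary contains both $b$ and $s$; symmetrically, $b'$ must lie strictly inside the face whose boundary contains $s$ and $a$. These two faces are disjoint, so no face of the whole gadget can have both $a'$ and $b'$ on its boundary: the four terminals cannot all be placed on the outer face, let alone in the interleaved order $a,b,a',b'$, and the gadget cannot replace a crossing. (Your check of interleaving implicitly used single constraint vertices $C_1,C_2,C_3$, for which the embedding does exist; the doubling is exactly what kills it. This is the trap the paper flags when it says that naively replacing equalities by inequality pairs destroys planarity.) The paper avoids it by never enforcing an equality \emph{inside} the crossing gadget: its half-crossing gadget introduces a fresh variable $z$ and uses only one-sided constraints such as $x+y\le z$ and $z\le x'+y$, which transmit $x\le x'$ (and $y\le y'$) with each constraint vertex occurring once; four half-crossings are then chained in a cycle so that the two routes from $x$ to $x''$ yield $x\le x''$ and $x\ge x''$, recovering equality globally rather than locally.

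A secondary, fixable omission: you assert the final ``satisfiable if and only if'' but never argue the backward direction with respect to the widened range. Over $[1/2,4]$ an \etrinv instance can acquire spurious solutions (e.g.\ $u=1$, $u+u=w$, $w+w=v$ is satisfiable over $[1/2,4]$ but not over $[1/2,2]$), so one must force every original variable back into $[1/2,2]$. Your design would in fact survive this point, because any variable entering an inversion $x\cdot y=1$ with $y\ge 1/2$ satisfies $x\le 2$, and you route every constraint through inversion-based copies---but this needs to be said and proved, since it is the only thing standing between your reduction and unsoundness. The paper instead handles it explicitly and locally, attaching to each original variable $x$ the pendant constraints $a_x=1$, $b_x=1$, $a_x+b_x\ge x$, which preserve planarity trivially.
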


% \changed{The proof, which is in 
% the long version~\cite{arxiv-version},
%available on Arxiv,
The proof, which is in Appendix~\ref{sec:PlanarETR}, 
builds on the work of
Dobbins, Kleist, Miltzow and Rz{\polhk{a}}{\.{z}}ewski~\cite{AreaUni}
who showed
that \etrinv is \ER-complete even %for the case that 
\changed{when} the variable-constraint incidence graph is planar.
\changed{We cannot use their result directly, but follow similar steps in our proof.}
%\anna{Deferring further details of how we prove the result to the later section.}
\remove{We denote this as \planetrinvOld.
One might be tempted to believe that the following 
reduction from the \planetrinvOld shows 
\planetrinv is \ER-complete. Take an instance $I$ of
\planetrinvOld and replace every addition constraint 
$x+y =z$ by two constraints of the form
$x+y \leq z$ and $x+y \geq z$.
And the same for the inversion gadget. 
The problem with this idea is that the new variable-constraint graph is \emph{not} planar anymore.
\till{I hope things are clear now.}
}

%%%%%%%%%%%%%%%%%%%%%%%%%%%%%%%%%%%%%%%%%%%%%%%%%%%
\section{Irrational Coordinates}
%%%%%%%%%%%%%%%%%%%%%%%%%%%%%%%%%%%%%%%%%%%%%%%%%%%
%The purpose of this section is to prove Theorem~\ref{thm:IrrationalVertices}.

\IrrationalVertices*

In fact, the result follows 
from our proof of Theorem~\ref{thm:Main}, but it is interesting to have a simple explicit example,
which is given in Figure~\ref{fig:irrationalVertices}.  
This example is adapted from a result of Abrahamsen et al.~\cite{abrahamsen2017irrational}.  
Details can be found 
in Appendix~\ref{app:irrational},
% in the long version~\cite{arxiv-version},
%\attention{on Arxiv,}
but we outline the idea here.
%
%We adapt a result of Abrahamsen et al.~\cite{abrahamsen2017irrational} to construct the example. Here we only describe the idea, while the details are in Appendix~\ref{app:irrational}. 
Abrahamsen et al. studied the \AGP, where given a polygon $P$ and a number $k$, and we want to find a set of at most $k$ guards (points) that together see the entire polygon. We say a guard $g$ sees a point $p$ if the entire line-segment $gp$ is contained inside the polygon $P$. Abrahamsen et al. gave a simple polygon with integer coordinates such that there exists only one way to guard it optimally, with three guards. Those guards have irrational coordinates. See Figure~\ref{fig:irrationalGuards} 
%in Appendix~\ref{app:irrational}, 
for a sketch of their polygon. A key ingredient of their construction is to create notches in the polygon boundary that force there to be a guard on each of the three so-called \emph{guard segments}. %, shown in dashed lines. 
The coordinates of the polygon then force the guards to be at irrational points.

\begin{figure}[t]
    \centering
    \includegraphics[width=.5\linewidth]{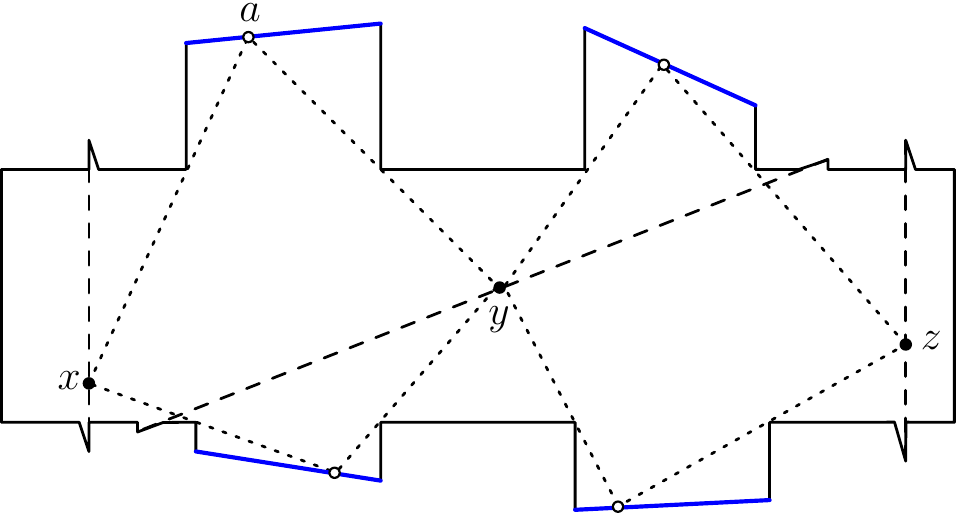}
    \caption{A sketch of the polygon from Abrahamsen et al. 
    The three guards (black dots) must lie on the \emph{guard segments} (dashed lines).
%    The three dashed lines are \emph{guard segments}, and each one must have a guard on it. 
%    The three guards are indicated by black dots. 
    }
    \label{fig:irrationalGuards}
\end{figure}

\begin{figure}[t]
    \centering
    \includegraphics[width=.8\textwidth]{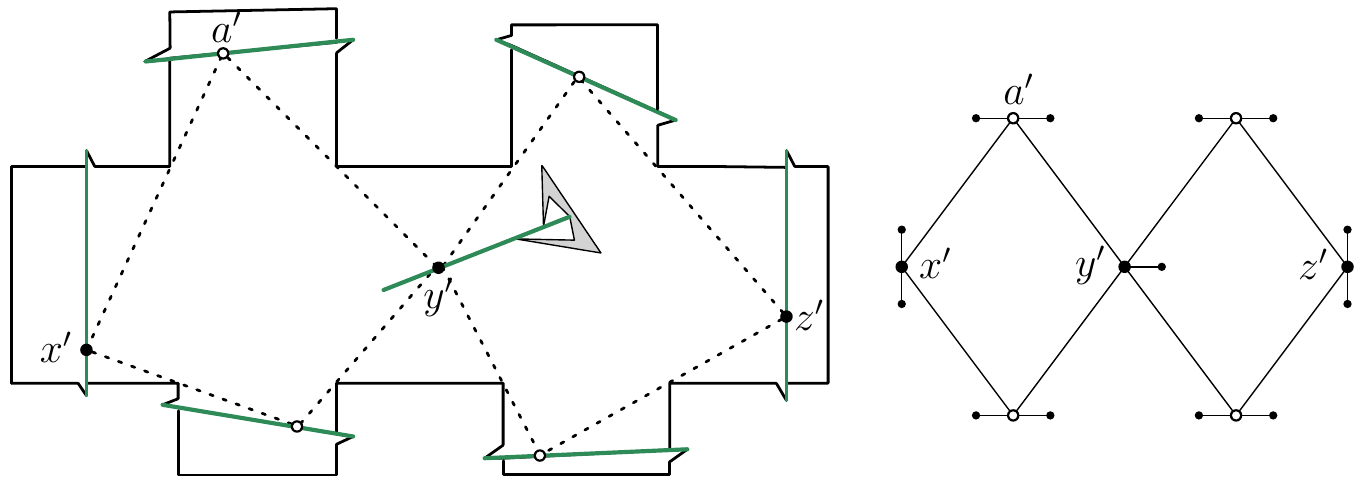}
    \caption{Left: An instance of \DGP based on Figure~\ref{fig:irrationalGuards} that requires vertices at irrational coordinates. 
    %The notches at the ends of the green segments are not shown. 
    Right: The graph, with small dots indicating the fixed vertices.
    }
    \label{fig:irrationalVertices}
\end{figure}

We adapt their example %as follows.%, see Figure~\ref{fig:irrationalVertices}.
by using \emph{variable segments} (shown in green) instead of guard segments, and vertices instead of guards.
%Instead of guard segments we use \emph{variable segments} (shown in green), and instead of guards we use vertices. 
%We describe variable segments in detail in Section~\ref{sec:MainER}, see also Figure~\ref{fig:segment-end}. 
By placing notches in the polygon boundary with fixed vertices of the graph in the notches, we can force there to be a vertex on each variable segment.
We create two cycles that replicate the guarding constraints, and  use a hole in order to keep our graph drawing planar.
From their proof we show that $x'$, $y'$ and $z'$ must be at irrational coordinates.

%We show that the constraints on the three vertices $x'$, $y'$ and $z'$, shown with black dots in Figure~\ref{fig:irrationalVertices}, are exactly the same as for the guards $x$, $y$, and $z$ in Figure~\ref{fig:irrationalGuards}. As there exists only one position to guard Abrahamsen et al.'s  polygon with three guards, there exists also only one way to place the vertices in the polygon of Figure~\ref{fig:irrationalVertices}, and those positions are irrational.

%%%%%%%%%%%%%%%%%%%%%%%%%%%%%%%%%%%%%%%%%%%%%%%%%%%
\section{\ER-completeness}\label{sec:MainER}
%%%%%%%%%%%%%%%%%%%%%%%%%%%%%%%%%%%%%%%%%%%%%%%%%%%
%The purpose of this section is to prove the following theorem.
\MainER*

\begin{proof}
First note that \DGP lies in \ER since we can express it as \finalchanged{an \ETR{} formula}.
%\anna{The problem is in \ER, right?  Make variables for vertex coordinates.  Express that the resulting drawing is simple (no intersections) and remains in the region.  We should strengthen the theorem to say \ER-complete.}
%\till{agreed.}
%The proof is
To prove that the problem is \ER-hard we give
a reduction from \planetrinv. 
Let $I$ be an instance of \planetrinv.
%the former computational problem.
% \anna{Can we simply say that it is a reduction from \planetrinv?}
% \till{Yes thanks!}
% Given an instance $I$ of \planetrinv, we first compute the equivalent instance $I'$ of  \planetrinv (see Theorem~\ref{thm:planarETRINV}). By the proof of Theorem~\ref{thm:planarETRINV}, we may assume that the variables of $I'$ that originally belongs to $I$ are restricted to be  in the interval $[1/2,2]$, and the remaining variables must lie in $[1,4]$. 
We will build an instance $J$ of \DGP such that $J$ admits an affirmative answer if and only if $I$  is satisfiable. 
The idea is to construct gadgets to
represent variables, and gadgets to enforce the addition and inversion inequalities, 
$x + y \leq z ,  x + y \geq z, x\cdot y \leq 1,  x\cdot y \geq 1$.
We also need gadgets to copy and %duplicate 
\finalchanged{replicate}
variables\finalchanged{---``wires'' and ``splitters'' as conventionally used in reductions.}
%for representing variables, copying variables, 
%duplicating variables, turn, inversion and addition.
%For inversion and addition it is sufficient 
%to describe gadgets that enforce the following weaker constraints:
%\[x\cdot y \leq 1,\quad  x\cdot y \geq 1, \quad x + y \leq z , \quad x + y \geq z.\]
Thereafter, we have to describe how to combine 
those gadgets to obtain %attain   Anna: you "obtain" an object and "attain" a goal.
an instance $J$ of \planetrinv.

\remove{
\begin{figure}[htbp]
    \centering
    \includegraphics[width=\textwidth]{DefineVariables}
    \caption{\anna{I propose deleting this figure and using the one below instead.} The value of a variable restricted to be in $[1/2,4]$  can be represented by 
    the position of a vertex. The positions marked with $\frac12$ 
    and $4$ represent those respective values. Every other 
    value in the interval $[\frac12,4]$ is represented by linear
    interpolation.
    }
    \label{fig:variables}
\end{figure}
}

%\paragraphnew{Describing Variables.}
\paragraphnew{\finalchanged{Encoding Variables.}}
%We will encode the value of variables as the position of vertices.
We will encode the value of a variable in $[1/2,4]$ as the position of a vertex that is constrained to lie on a line segment of length $3.5$,   
which we call a \emph{variable-segment}.
One end of a variable-segment encodes the value $\frac{1}{2}$, the other end encodes the value $4$, and linear interpolation fills in the values between.
Figure~\ref{fig:segment-end} shows one side of the construction that 
forces a vertex to lie on a variable-segment.  The other side is similar.

\begin{figure}[pb]
    \centering
    \includegraphics[width=.4\linewidth]{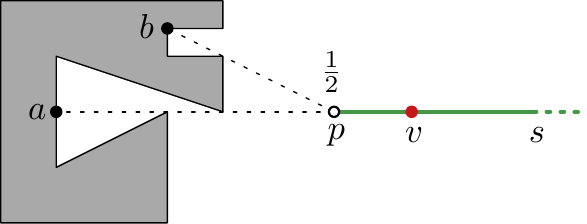}
    \caption{Variable $v$ is represented as a point on variable-segment $s$ (shown in green). The construction of one end of $s$ is illustrated.
In the graph, vertex $v$ is adjacent to fixed vertices $a$ and $b$ on the boundary of a hole of the region (shaded).  Adjacency with $a$ forces $v$ to lie on the line of $s$.  Adjacency with $b$ forces $v$ to lie at, or to the right of, point $p$ which is associated with the value $1/2$. \finalchanged{Note that $p$ is not a vertex.}  
%\till{I iped the figure, so that it is consistent in style with the other figures.}
}
    \label{fig:segment-end}
\end{figure}

%To represent a variable $x$, which is restricted to be in the interval $[p,q]$, we construct a vertex that must lie on
%a line segment as depicted in Figure~\ref{fig:variables}. 
%We will interpret one certain position as $p$ and another 
%position as $q$. \anna{Do we need $p$ and $q$, or should it be $1/2, 4$?}
%All positions in between represent the value 
%that follows from linear interpolation.
% \anna{There is a long discussion in the .tex file.  First, please consider the current revision, with the new figure.}
% \till{I like this solution. Should we add that we do the gadget on both sides?} \anna{ok, done (in text).}
\remove{  some long discussion . . . 
\anna{We really must describe how the region boundary behaves to create each segment.  I guess we can place the segments as we like in the plane and then place tiny holes of the region at the endpoints of the segment like the hole in Figure~\ref{fig:irrationalVertices}.  I think we need a hole at each end.  Will we allow the vertex to be placed coincident with the boundary vertex that is creating the segment?  If not, then we need some other way of enforcing the endpoints of the segment. I can imagine doing this by increasing the degree of the vertex, but that seems a bit dangerous 
because the vertex will need other edges incident to it.}
\till{I understand your concern. Bot solutions will actually work. 
Solution 1: Don't do anything. The vertex might lie outside of its suggested range.
So what. If there is ANY solution at all THEN there is also a solution inside the
correct range :).
\anna{Well, where do we argue that for each gadget?  In general, I think our proof is not rigourous enough.}
Solution 2: add some extra stuff to actually enforce the variable vertex
to lie at the correct position. This is fine, but require more attention 
to check. 
Personally, I prefer solution 1, as I don't like to add stuff 
that is not needed. Minimalism yeahh ;)
\anna{I suspect that solution 2 is more minimal.  We only need to describe in this section how to restrict a variable to its proper interval.  Whereas with solution 1 we need to look at each gadget and make the argument that if there is an out-of-range solution then there is also an in-range solution.  This seems like more work. I also imagine that solution 1 will make reviewers wonder (just as I did), but solution 2 will be more soothing
since we are doing exactly what \planetrinv says.}
About having a connection to the top and to the bottom
is probably more save. We need to prevent sometimes that an edge is above rather
than below a hole, so we need to bound its range somewhat. }
}

%We refer to the segment, which restricts the values of a vertex $v$ and is meant to encode a variable, as a \emph{variable-segment}.
% \till{A better notation would be highly appreciated.}\jyoti{this already looks good to me.} 
% \till{@Anna: If you are also happy with the notation remove these three comments.}
By slight abuse of notation, we will identify %sometimes 
a variable and the vertex representing it, if there is no ambiguity.
For the description of the remaining gadgets, our figures will show variable-segments (in green) without showing the polygonal holes that determine them.
%For the remainder of the 
%description, we will not draw the part of the 
%polygon that the vertex of the variable-segment is attached to.
%We only draw the variable-segment, \changed{as a green segment in all the figures.}  % consistently in green. 
% A: maybe this isn't really needed:
%For each variable we will have at least one vertex representing 
%its value. It will happen that several vertices represent the 
%same variable. We will ensure that all of those are consistent.  
%
%Our next gadget shows how to ``transmit'' a value by copying variables.  %Anna: this line is nice but not necessary

\begin{figure}[ht]
    \centering
    \includegraphics[width=.8\textwidth]{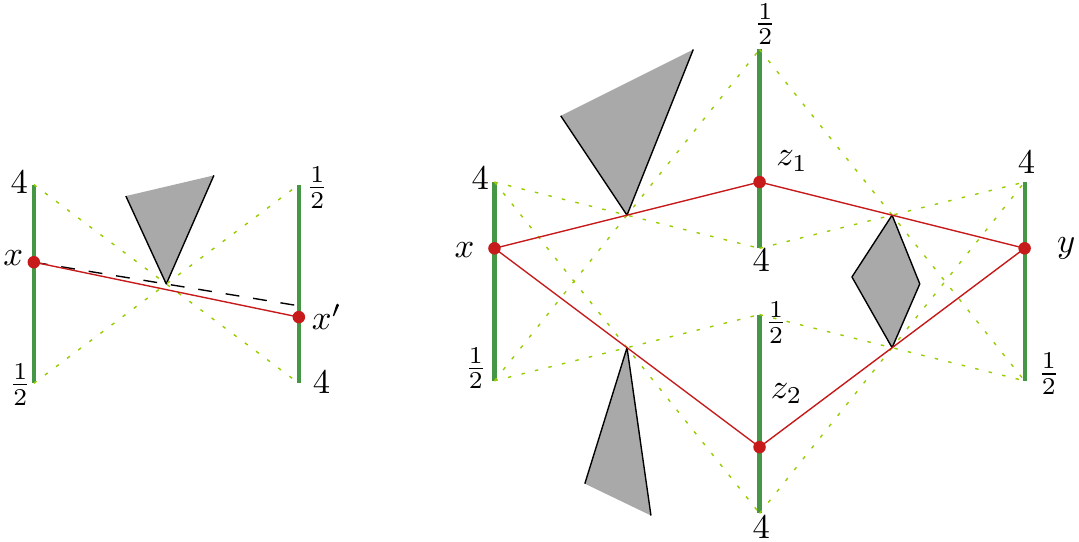}
    \caption{Copying.
    Left: a gadget to enforce $x {\leq} x'$.  Right: The full gadget enforcing $x{=}y$.
    % \anna{Old figure saved.}
    %Top Left: a gadget to enforce \changed{$x \leq x'$}.  %$x \geq x'$. 
    %Bottom Left: schematic description on how to combine 
    %four such gadgets to attain a gadget for $x  = y$. \anna{I think we could skip this schematic (use figures for geometry).}
    %Right: The full gadget enforcing $x = y$. 
    % \jyoti{I think we can stick to $[1/2,2]$. Can we bring the early figure back please? Please check the definition of \planetrinv.}
    % \till{I made it now consistent $[1/2,4]$, so having only one range for everything.}
    }
    \label{fig:FullCopy}
\end{figure}

%\paragraphnew{Copying Variables.}
\paragraphnew{\finalchanged{Copy gadget.}}
Given a variable-segment for a variable $x$, we will need to transmit its value \finalchanged{along a ``wire''}  to other locations in the plane. 
\finalchanged{We do this using a copy gadget in which we construct}
%To do this, we will construct 
a %second 
variable-segment for a new variable $y$ and enforce $x=y$.
We show how to construct a gadget that ensures $x \leq x'$ for a new variable $x'$, and then combine four such gadgets, enforcing
$x  \leq  z_1, z_1  \leq  y , x  \geq  z_2 , z_2  \geq  y$.
%\[
%    x  \leq  z_1, \quad
%    z_1  \leq  y , \quad
%    x  \geq  z_2 , \quad
%    z_2  \geq  y. 
%\]
This implies $x=y$.

The gadget enforcing $x \leq x'$ %``$x\geq x'$'' 
is shown at the left of 
Figure~\ref{fig:FullCopy}. It consists of 
two parallel %vertical 
variable-segments.
In general, these two segments need not be horizontally aligned.
%\changed{The segments need not have the same lengths,
%\attention{Or maybe we keep the lengths the same?}  nor do they need to be horizontally aligned.} \anna{We use the more general forms later.}
In the graph we connect the corresponding vertices by an edge.
%The \changed{corresponding vertices} %variables 
%are connected by an edge in the graph. 
The left and the right variables are
encoded in opposite ways, i.e., $x$ increases as the vertex moves up
and $x'$ increases as the corresponding vertex goes down.
We place a hole of the polygonal region (shaded in the figure) with a vertex at the intersection point of the lines joining the top of one variable-segment to the bottom of the other. The hole must be large enough that the edge from $x$ to $x'$ can only be drawn to one side of the hole.
%Part of the polygon obstructs the visibility between the two variable-segments. 
An argument about similar triangles, or the ``intercept theorem'', 
%The intercept theorem, 
also known as Thales' theorem, implies $x \leq x'$. %$x \geq x'$.

We combine four of these gadgets to construct our copy gadget,  as illustrated on the right of Figure~\ref{fig:FullCopy}.
%We can think of the gadget as a way to transmit a variable along a ``wire''. %"wire" now mentioned at the start of the para
%
%Note that one can use a chain of copy gadgets to transport 
%the value of a single variable to various constraint gadgets.   
%We refer to such concatenation of gadgets as \emph{wires}.
%\anna{If we introduce the term `wire', then we should use it in the final `putting it all together'' section.}

\begin{figure}[pb]
    \centering
    \includegraphics[width=.7\textwidth]{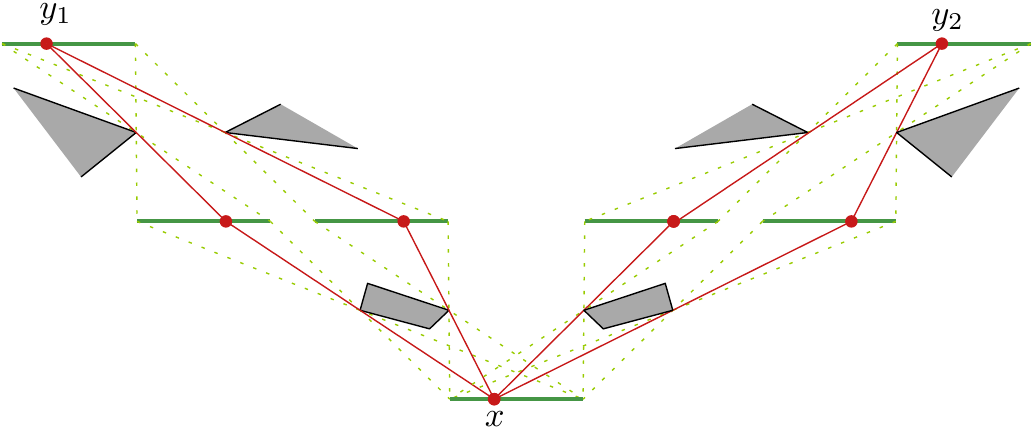}
    \caption{%Duplication.  
    Splitting.
    The variables $y_1,y_2$ have both the same value as $x$.}
    \label{fig:Duplicate}
\end{figure}

%\paragraphnew{Duplicate Variables.}
\paragraphnew{\finalchanged{Splitter gadget.}}
Since a single variable may appear in several constraints,
we %will 
\finalchanged{may need to}
split a wire into two wires, each holding the correct value of the same variable. 
%\anna{Currently this is the only use of `wire'.  Use it more, or delete it.}\jyoti{we may use it once more  in the reduction (if needed).}
Figure~\ref{fig:Duplicate} shows a gadget to 
%duplicate
\finalchanged{split} 
the variable $x$ to variables $y_1$ and $y_2$.  The gadget consists of two copy gadgets sharing the variable-segment for $x$.
We can construct the two copy gadgets to avoid any intersections between them.
%A gadget to compute such a duplication is depicted in Figure~\ref{fig:Duplicate}.
%The gadget consists of two copy gadgets having a shared variable-segment. The value of the variable corresponding to this shared segment, i.e.,~the variable $x$ in Figure~\ref{fig:Duplicate}, is being duplicated. 
%It is straightforward to see that \anna{we can construct . . .} the two copies of the copy gadget do not conflict with one another.

%\paragraphnew{Turning.}
\paragraphnew{\finalchanged{Turn gadget.}}
We need to encode a variable both as a vertical and
as a horizontal variable-segment. 
To transform one into the other we use a turn gadget.%, as described below.
%A turn gadget transforms one into the other. In the following we describe the details of a turn gadget.%To go from one to another,  we need a turn gadget.

\begin{figure}[pt]
    \centering
    \includegraphics[width=.9\textwidth]{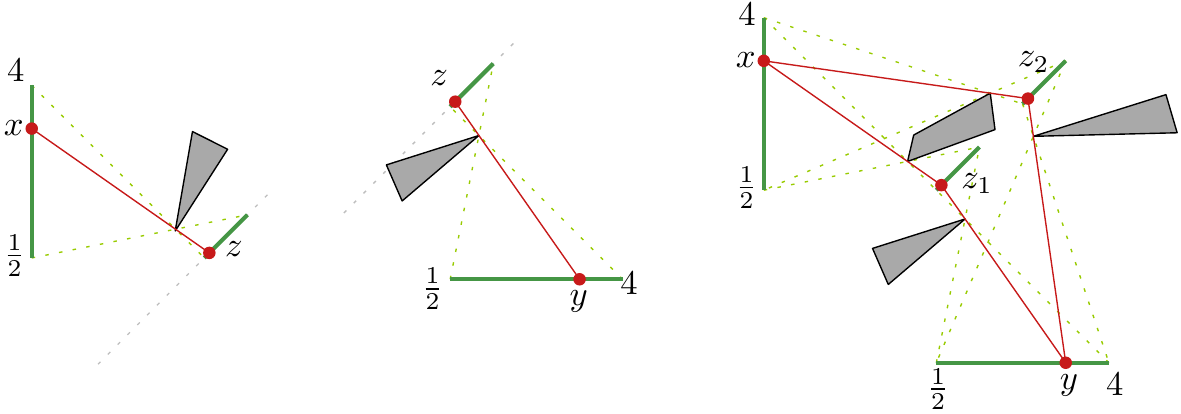}
    \caption{%{\anna{Rotate the first two figures by $90^\circ$.  $x$ should be vertical and $z$ with slope 1.} \jyoti{done.} 
Turning. Left: Gadget to encode $x\leq f(z)$. Middle: Symmetric gadget to encode $y\geq f(z)$. Right: Four gadgets of the 
    previous type combined to enforce $x=y$, for $x$ and $y$ on a vertical and horizontal variable-segment, respectively.
    % \anna{Maybe there is room to put all 3 panes side-by-side.}
    % \till{Done.}
    }
    \label{fig:turn}
\end{figure}

The key idea is to construct %have 
two diagonal variable-segments for variables $z_1$ and $z_2$, and then transfer the value of the vertical variable-segment to the horizontal variable-segment using $z_1,z_2$. This is in fact very similar to the %computation of 
copy gadget, except that the intermediate variable-segments are placed \changed{on a line of slope 1.}  %diagonally. Anna: we crucially use slope 1
We do not know if it is possible to enforce the 
constraint $x\leq z$ directly. 
However, it is sufficient to enforce \changed{$x \leq f(z)$} for some  function $f$. 
\remove{\anna{Can't we write this as $x \leq f(z)$?  Then we could avoid the last step below (about strictly monotone and bijective).}
\till{Interesting point. So, I thought about it and 
at first I thought that we really need monotonicity,
as the statement is otherwise not true. But then your argument works and I could not see any flaw in it.
So my conclusion is that, we use monotonicity also
in your argument implicitly. When we write $x \leq  f(z)$,
we say all values $x$ smaller then $f(z)$ are fine. }}
See %to 
the left side of Figure~\ref{fig:turn}. 
% \jyoti{the figure seems changed, should be top instead of left.} 
Interestingly, we don't even know the function $f$. However, we do know that $f$ is monotone and we can construct 
another gadget enforcing \changed{$y \geq f(z)$}, %$f(y) \geq z$, 
for the same function $f$,
by making another copy of the first gadget reflected through the line of the variable-segment for $z$.
%The second gadget is essentially a \changed{reflected} copy of the first gadget
%along the line defined by the variable-segment of $y$.
%The only difference is that the vertex constraining visibility 
%is on the other side.

Combining four such gadgets, as on the right of Figure~\ref{fig:turn}, 
 yields the following inequalities:
 $x \leq f_1(z_1), f_1(z_1) \leq y, y \leq f_2(z_2), f_2(z_2) \leq x$.
%  \jyoti{right$\rightarrow$ bottom}
%\[ x \leq f_1(z_1), \quad f_1(z_1) \leq y
%, \quad  y \leq f_2(z_2)
%, \quad f_2(z_2) \leq x\]
% Anna: saving space by putting equations inline
% \anna{I think we can use $x \leq f(z_1)$, $f(z_1) \leq y$,
% etc.  Then this implies $x=y$ directly.}
% This implies
% \[ x \leq f_1(z_1)\leq y , \quad x \geq f_2(z_2) \geq y\] and
% \[x = y.\]
% The last line follows from the fact that $f_1$ and $f_2$ are
% strictly monotone and hence bijective.
This implies $x = y$.

\paragraphnew{Addition \finalchanged{gadget}.}
%Before, we can describe the gadget, which allows us to make
%two wires cross, we need the addition gadget.
%
% Anna: we don't need the following since we are reducing from \planetrinv
%We will only describe the addition gadget ensuring the
%constraints $x + y \geq z$ and $x + y \leq z$. 
%This is sufficient as these two constraints together ensure $x+y =z$. 
%As a drawback, we cannot use the result from Dobbins et al.~\cite{AreaUni} 
%that planar \etrinv is \ER-complete, as they require 
%a \emph{single} gadget to encode addition. \jyoti{This means our addition gadget here cannot be used with their reduction, the reason is not very clear. Is it %possible to explain a bit   why more than one gadget for addition would not work for them?}
%\till{I tried to explain it now at the end of the intro,  I hope it is clearer now.}
\begin{figure}[btp]
    \centering
    \includegraphics[width=.8\textwidth]
    % [width = 0.85\textwidth]
    {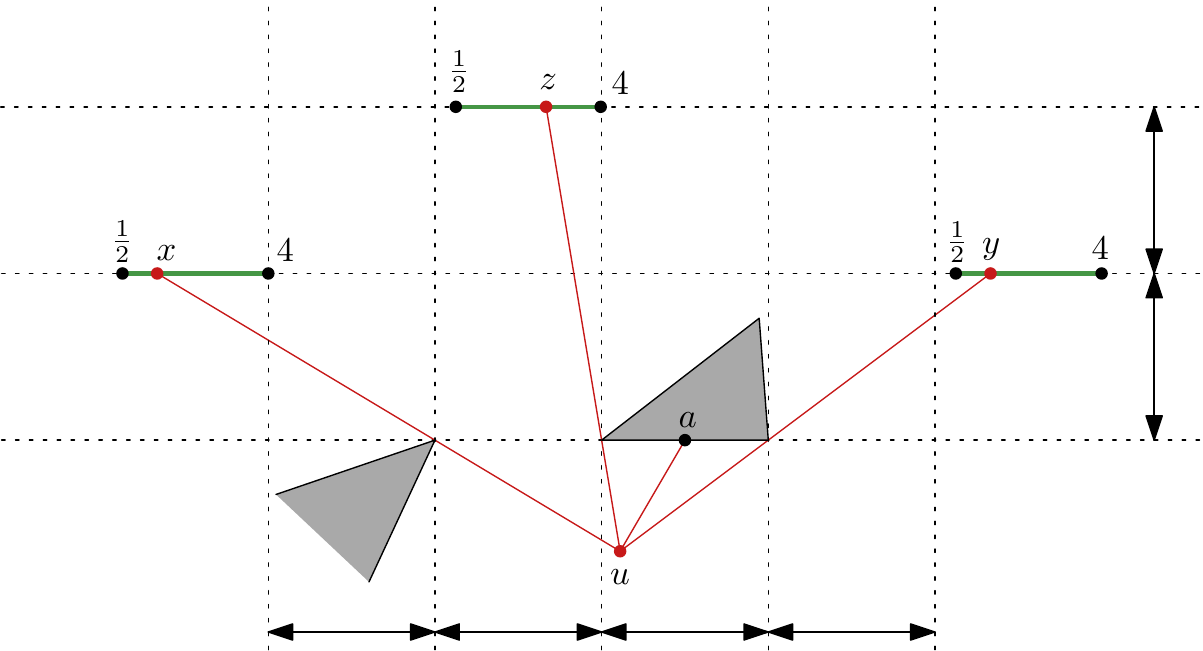}
    \caption{
    % \anna{Note that we said all variables in the range $1/2,4$.  Change the 0's and the 8.} 
    % \till{Is this better?}
    Addition. The three vertices $x,y,z$ can only be connected 
    to $u$ if  $x {+} y {\geq} z$ holds.}
    \label{fig:Addition}
\end{figure}
The gadget to \finalchanged{enforce} %encode 
$x + y \geq z$ is depicted in
Figure~\ref{fig:Addition}. 
Important for correctness is that the gaps between the dotted auxiliary
lines have equal lengths. %\changed{are all the same}.
%all have some fixed distance to one another.  %A: they don't all have the same distance to each other.  Just the gaps (consecutive pairs) are the same.
This is essentially the same
gadget that was used 
by Abrahamsen et al.~\cite[Lemma 31]{AbrahamsenAM18STOC}.
% The simplest way to see correctness is to check $3$ different
% values for $x,y$ and observe that it indeed implies that 
% $z$ must be at least $x+y$. It might sound insufficient, 
% but we know that the gadget is descibable by some rational function
% An intuitive way to convince one-self of correctness
% is to build the gadget with geogebra or some other geometric
% software and see how manipulating $x$ or $y$ influences 
% the choices for $z$. 
% To keep this work self-contained we 
% give a proof of the following lemma in Appendix~\ref{sec:addition}.
% \attention{An alternative proof can be found 
%  in the long version~\cite{arxiv-version}.}
We offer an alternative correctness proof, which is in   Appendix~\ref{sec:addition}.
\begin{lemma}[\cite{AbrahamsenAM18STOC}]\label{lem:addition}
    The gadget in Figure~\ref{fig:Addition} enforces 
    $x+y\geq z$.
\end{lemma}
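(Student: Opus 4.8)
The plan is to introduce explicit coordinates for the gadget of Figure~\ref{fig:Addition} and to reduce the combinatorial statement ``the three edges $ux$, $uy$, $uz$ can all be drawn as straight segments inside the region'' to a single linear inequality in the values $x$, $y$, $z$, which will turn out to be exactly $x+y\geq z$. First I would fix a coordinate system in which the three variable-segments carrying $x$, $y$, and $z$ are parallel, with $x$ and $y$ on the two outer segments and $z$ on the middle one. By the variable encoding already set up (linear interpolation along a segment of length $3.5$, representing values in $[1/2,4]$), the position of each of the three vertices is a fixed affine function of its value. The hypothesis highlighted in the statement---that \emph{the gaps between the dotted auxiliary lines have equal lengths}---is precisely what fixes the horizontal spacing of the three segments to be equal, and this is exactly the hypothesis under which the intercept theorem (Thales' theorem) applies to the pencil of lines through $u$.

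Next I would analyze the vertex $u$ together with its three incident edges. The region, with the hole placed as in Figure~\ref{fig:Addition}, is arranged so that $u$ can be joined by straight segments to all three of $x$, $y$, $z$ at once if and only if $u$ lies weakly on one side of the line determined by the points encoding $x$ and $y$; the binding (extreme) case is the degenerate configuration in which the relevant points become collinear, i.e. the three segments $ux$, $uy$, $uz$ just graze the hole boundary. Here is where the equal-gap spacing does the work: by similar triangles the crossing height of the pencil on the middle segment is the corresponding affine combination of its crossing heights on the two outer segments. Substituting the affine encodings of the three values into this boundary condition collapses the feasibility condition to the single inequality $x+y\geq z$. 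Because the region $R$ is closed, the degenerate (equality) case---edges touching the hole or the region boundary---is permitted, so the boundary of the feasible set of values is attained and corresponds exactly to $x+y=z$.

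Finally I would verify both implications separately. For the ``if'' direction, assuming $x+y\geq z$, I exhibit an explicit feasible placement of $u$ and check that each of $ux$, $uy$, $uz$ stays inside $R$ and avoids the hole, using closedness of $R$ to allow contact with the boundary in the equality case. For the ``only if'' direction, I argue that any drawing of the three edges inside $R$ must respect the side/containment inequality forced by the hole, and unwinding the intercept-theorem relation then yields $x+y\geq z$. Since the lemma is attributed to Abrahamsen et al., I treat this as an independent correctness proof of their gadget, so the task is self-contained geometry rather than citing their argument.

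The step I expect to be the main obstacle is the exact bookkeeping of the affine encodings and the precise position of the hole, so that the intercept-theorem computation yields the \emph{sum} $x+y$ on the right-hand side rather than the average $(x+y)/2$; the factor of two must be absorbed into the offsets and scale at which values are marked on the middle segment, and the constants must be chosen so that geometric equality corresponds to $x+y=z$ exactly. The qualitative Thales argument is straightforward once coordinates are fixed; it is this calibration of offsets, together with the careful treatment of the boundary (touching) case in a closed region, that requires the most care.
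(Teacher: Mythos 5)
Your approach is correct in outline but takes a genuinely different route from the paper's. You prove the lemma by direct verification: fix coordinates, identify the binding configuration in which the edges $ux$, $uy$, $uz$ just graze the obstacle, apply the intercept theorem to that degenerate pencil, and calibrate the affine encodings so that the boundary of the feasible set is exactly $x+y=z$; you also separate the two implications and invoke closedness of $R$ so the equality case is attained. That is essentially the spirit of the original argument of Abrahamsen et al.~\cite{AbrahamsenAM18STOC}, to whom the lemma is credited; the paper deliberately offers an \emph{alternative} proof, a perturbation argument. There one fixes $y$, moves $x$ a distance $d$ along the line $\ell$ containing both the $x$- and $y$-variable-segments, and shows that the maximal attainable $z$ moves by exactly the same distance: writing $t$ for half the displacement of $z$ and $A,A',B,B'$ for the lengths in Figure~\ref{fig:AdditionProof}, the equal-gap hypothesis $\|a-b\|=\|b-c\|$ gives $B'=2B$ and $A'=2A$, hence $d=B'-A'=2(B-A)=2t$; symmetry between $x$ and $y$ then closes the argument. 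What each route buys: the paper's sidesteps all coordinate bookkeeping---the factor of $2$ you correctly flag as your main obstacle (obtaining $x+y$ rather than $(x+y)/2$ on the right-hand side) is delivered automatically by the doubling $B'=2B$, $A'=2A$---whereas your route makes the ``if'' direction explicit (an exhibited placement of $u$ whenever $x+y\geq z$), which the paper's extremal-tracking argument leaves implicit. One concrete caveat: your geometric setup does not quite match the gadget. The $x$- and $y$-segments are collinear on the single line $\ell$ (the paper's proof states this explicitly), so ``the line determined by the points encoding $x$ and $y$'' is $\ell$ itself and cannot serve as a separating constraint on $u$; the true binding condition is that the segments $ux$, $uy$, $uz$ graze corners of the hole. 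This is a repairable misdescription rather than a flaw in the method, but a written-out version of your proof would have to phrase the feasibility condition in terms of the hole's corners, not the $xy$-line.
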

%The proof can be found in the appendix.

\noindent
The gadget that enforces $x+y \leq z$ %$x{+}y{\leq} z$ 
is just a mirror copy of the previous gadget.

\paragraphnew{Inversion \finalchanged{gadget}.}
The inversion gadgets to enforce
%Now, we describe a gadget for inversion.
%Precisely, we have to describe a gadget enforcing
$x \cdot y \leq 1$ and $x \cdot y \geq 1$
are 
%Those two gadgets are 
depicted in Figure~\ref{fig:inversion}.
We use a horizontal variable-segment for $x$ and a vertical variable-segment for $y$ and align them as shown in the figure, $1.5$ units apart both horizontally and vertically.  We make a triangular hole with its apex at point \finalchanged{$q$} as shown in the figure.
The graph has an edge between $x$ and $y$.
%Note that the $x$ variable \changed{has} %is represented on 
%a horizontal variable-segment and the $y$ variable \changed{has} %is represented on a vertical variable segment.
\begin{figure}[tb]
    \centering
    \includegraphics[width=.9\textwidth]{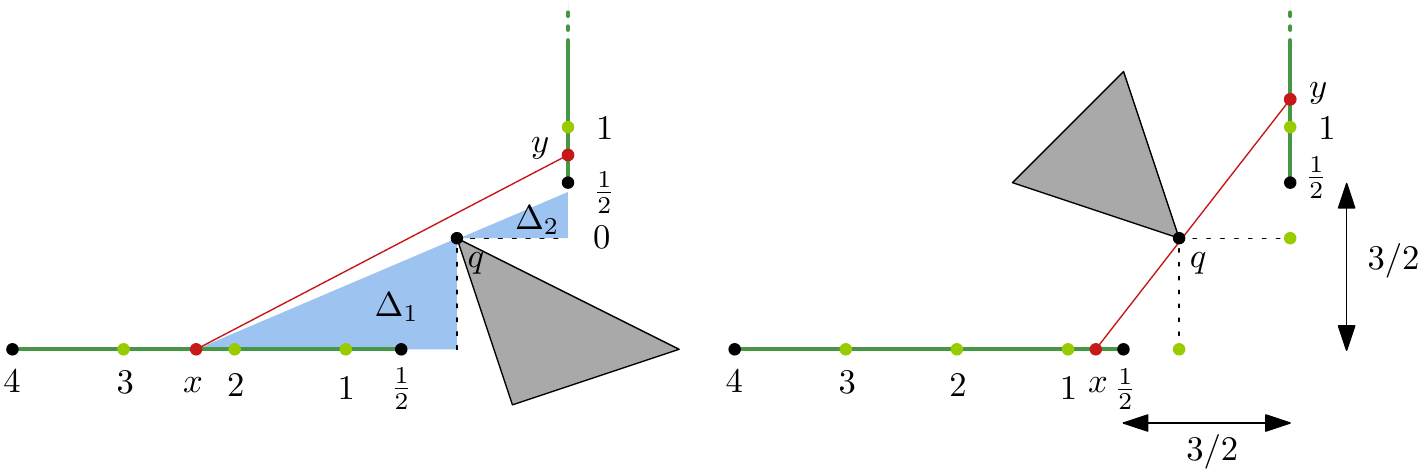}
    \caption{
    % \anna{The ranges of the variables need to be fixed.}
    % \till{good now?}
    Inversion.  
    %The $x$ variable is represented on a horizontal variable-segment and the $y$ variable is represented on a vertical variable segment. 
    Left: Gadget enforcing $x\cdot y \geq 1$. Right: Gadget enforcing $x\cdot y \leq 1$.
%    \attention{Please add $q$ to the figure.  And clarify size of box (or alignment).}
}
    \label{fig:inversion}
\end{figure}

For correctness, observe that if $x$ and $y$ are positioned so that the line segment joining them goes through point \finalchanged{$q$}, then, because triangles $\Delta_1$ and $\Delta_2$ (as shown in the figure) are similar, we have $\frac{x}{1} = \frac{1}{y}$, i.e.~$x \cdot y=1$.
If the line segment goes above point \finalchanged{$q$} 
(as in the left hand side of Figure~\ref{fig:inversion})
then $x\cdot y \geq 1$, and if the line segment goes below then $x \cdot y \leq 1$.
%In order to see that the gadget indeed enforces the correct constraint
%notice that the triangle $\Delta_1$ and $\Delta_2$, as 
%depicted in Figure~\ref{fig:inversion} are similar.
%In this context similar means they are the same up to translation and scaling.

\paragraphnew{Putting it all together.}
It remains to show how to obtain %attain 
an instance of $\textsc{Graph}$ $\textsc{in}$ 
$\textsc{Polygon}$  in polynomial time from an instance of \planetrinv.

Let $I$ be an instance of \planetrinv. 
As a first step we \finalchanged{modify the planar variable-constraint incidence graph so that a variable vertex of degree $d$ is replaced by ``splitter'' vertices of degree at most 3 to create $d$ copies.}
%split every variable, until all variables have degree at most $3$.
Then we compute a plane rectilinear drawing $D$ of the  \finalchanged{resulting planar} %planar variable-constraint incidence 
graph, which can be done in polynomial time using rectilinear planar drawing algorithms~\cite{NishizekiRahman}. 
The edges of $D$ act as wires and we replace each horizontal and vertical segment by a copy gadget, and   
%As a next step, we 
replace every $90$ degree corner, by a turn gadget. 
\finalchanged{Every splitter vertex and constraint vertex}
%Every constraint 
will be replaced by the corresponding gadget, \finalchanged{possibly using turn gadgets.}
%Note that the inversion gadgets may also need a turn gadget. 
We add a big square to the outside, to ensure that everything is inside one polygon.  
 \changed{See Figure~\ref{fig:reduction} in Appendix~\ref{sec:addition}.}
% \attention{See the long version~\cite{arxiv-version} for an illustration.}

It is easy to see that this can be done in polynomial time, 
as every gadget has a constant size description and can be %easily 
described with rational numbers, although, we did not do it explicitly. 
In order to see that we can also use integers, note that we can scale everything with the least common multiple of all the denominators of all numbers appearing. This can also be done in polynomial time.
\qed
\end{proof}

\section{Vertex Coordinates}\label{sec:bits}

Since we have shown that \DGP may require irrational coordinates for vertices in general, it is interesting to examine bounds on coordinates for special cases.
In this section we discuss the bit complexity of vertex coordinates needed for two well-solved special cases of \DGP.

Tutte's algorithm~\cite{tutte60} finds a straight-line planar drawing of a graph inside a fixed convex drawing of its outer face.  
Suppose the graph has $n$ vertices and each coordinate of the convex polygon uses $t$ bits.
Tutte's algorithm runs in polynomial time, but the number of bits used to express the vertex coordinates is a polynomial function of $t$ and  $n$.  
The dependence on $n$ means that the drawing uses ``exponential area.''
Chambers et al.~\cite{ChambersEGL12} gave a different algorithm that uses polynomial area---the number of bits for the vertex coordinates is bounded by a polynomial in $t$ and $\log n$.
%A: They express their result in terms of diameter D.  The relationship is that t = log D.  They claim a drawing of diameter D^4 n^6 on an integer grid.  This means that the number of bits is log( D^4 n^6 ) = 4 log D + 6 log n, i.e. poly in t and log n.

\remove{  %Anna: this aside can be left out of this page-restricted version.  It is not really part of our topic.
(As an aside, the algorithm of Chambers et al.~loses strict convexity of the faces for 3-connected graphs.  B\'ar\'any and Rote~\cite{barany2006strictly} achieved strictly convex faces and polynomial area, but did not allow a fixed outer polygon. 
It is an open problem to achieve all three properties, i.e.~to find a straight-line planar drawing of a 3-connected graph with a fixed outer face, convex inner faces, and polynomial area. The three cited papers each achieve two of the properties.)
}

%Jyoti explaining Chambers: \jyoti{I think if the convex polygon coordinates are of linear size, then they can produce a drawing where each vertex is of size $O(n^{10})$. See Sec 3.5 of their arxiv. Details: Define resolution $d$ of a drawing to be the smallest vertex-vertex or vertex-nonAdjacentEdge distance, and diameter $D$ to be largest vertex-vertex distance. (I) Then they can produce a drawing with resolution at least $\Omega(d/n^3)$, where $d$ is the resolution of a complete graph drawn using the given convex polygon vertices. (II) if $p$ has integer coordinates with diameter $D$, then they can produce a drawing with diameter $O(D^4n^6)$ and integral coordinates.  }

The other well-solved case of \DGP is the minimum link path problem. %, as discussed in Section~\ref{sec:Introduction}.
%In this case 
Here we have a general polygonal region with holes, $R$, but the graph is restricted to a path with endpoints $s$ and $t$ fixed on the boundary of $R$.
%the polygonal region.  
Based on a lower bound result of Kahan and Snoeyink~\cite{kahan1999bit}, Kostitsyna et al.~\cite{kostitsyna2016complexity} proved a tight bound of $\Theta(n\log n)$ bits for the coordinates of the bends on a minimum link path.  
Note that the dependence on $n$ means that this bound is exponentially larger than the bound 
%achieved by Chambers et al.~
for drawing a graph inside a convex polygon.
Problem 3 below asks about the complexity of drawing a tree in a polygonal region.

%\section{Open Problems}\label{sec:Conclusion}
\section{Conclusion and Open Questions}\label{sec:Conclusion}

Our result that \DGP is \ER-complete is one of the first \ER-hardness results about drawing planar graphs with straight-line edges---along with a recent result 
about drawings with prescribed face areas~\cite{AreaUni}.
We conclude with some open questions:

\remove{
In this paper, we studied the problem of 
finding a planar straight-line drawing of a graph inside a polygonal region, and showed that it is \ER-complete.
%drawing a graph inside a simple polygon.
%Although this is a very natural question, we are the first to show \ER-hardness. 
\changed{Previous \ER-hardness results for graph drawing involved other representations, such as disk or segment intersection, 
or involved straight-line drawings of non-planar graphs.
Our result is one of the first \ER-hardness results about drawing planar graphs with straight-line edges---along with a recent result 
about drawings with prescribed face areas~\cite{AreaUni}.}
%\changed{This is the first problem about drawing planar graphs with straight-line edges that has been proved \ER-hard---previous 
%hardness results for graph drawing involved
%other representations, such as disk or segment intersection, 
%or involved straight-line drawings of non-planar graphs.}
%\till{What about PRESCRIBED AREA, as in this paper:\url{https://arxiv.org/abs/1712.05142}}
%\anna{You work too fast, Till!!  Is this revision (in red above) more accurate?}
%
% A: maybe we don't need to say the following again:
%Our results 
%\changed{follow the framework of results for the Art Gallery Problem~\cite{abrahamsen2017irrational,AbrahamsenAM18STOC}, but require significant further work.} \jyoti{add: `to handle the planarity constraints of \DGP'?} \jyoti{However, this last line seems a bit weak - I am also fine omitting it, we already mentioned this in the introduction.}
%rely in part  on ideas that were already used in previous papers~\cite{}.
%We add to the existing techniques new ideas.

% \anna: this part has moved into question 1.
%All our ideas, rely on the fact that the polygon might have 
%holes. It is unclear, but very interesting if the 
%\DGP is polynomial time solvable for a simple polygon (a polygonal region without holes).
%\footnote{We refer to polygonal regions
%without holes as \emph{simple} polygons.} is polynomial time solvable.

We conclude with some open questions:
}

%\begin{enumerate}
%    \item 

\noindent{\bf 1.} Our proofs of Theorems~\ref{thm:IrrationalVertices} and~\ref{thm:Main} used the fact that the polygonal region may have holes and may have collinear vertices.  Is \DGP polynomial-time solvable for a simple polygon (a polygonal region without holes) whose vertices lie in general position (without collinearities)?
    
%    \item 
\noindent{\bf 2.} Our proofs also used the assumption that the polygonal region is closed. For an open 
    % \till{Do you mean unbounded instead of open?} 
    % \anna{No, I mean open rather than closed, as discussed in the Intro.} 
    region, the problem \DGP is equivalent to the problem \PDE.  Is this problem \ER-hard? There are two versions, depending on whether the graph is given abstractly or via a combinatorial embedding. In the first case the problem is known to be NP-hard~\cite{patrignani2006extending}, but in the second case even that is not known.
    
%    \item 
\noindent{\bf 3.} What is the complexity of \DGP when the graph is a tree?  Can vertex coordinates still be bounded as for the minimum link path problem?
    When the tree is a caterpillar, the problem might be related to the minimum link watchman tour problem, which is known to be NP-hard~\cite{DBLP:journals/ipl/ArkinMP03}.  
%\end{enumerate}

%\jyoti{just a general question - why/do  we prefer  \ER-hard even when we know it is  \ER-complete? This does not matter here, but I was thinking this while reading the conclusion.} \anna{We just recently changed the main theorem to say ``\ER-complete''.  I changed the abstract now as well.  Maybe the uses of ``hard'' left in the conclusions are ok.  } \jyoti{Thanks, got it.}

%\paragraphnews
\subsubsection{Acknowledgment.}
We would like to thank G\"{u}nter Rote, who discussed
with the \changed{second} author the turn gadget in the context of the \AGP.

\bibliographystyle{splncs04}
\bibliography{Bibli}

\newpage

\appendix

%%%%%%%%%%%%%%%%%%%%%%%%%%%%%%%%%%%%%%%%%%%%%%%%%%%
\section{\planetrinv is \ER-complete}\label{sec:PlanarETR}
%%%%%%%%%%%%%%%%%%%%%%%%%%%%%%%%%%%%%%%%%%%%%%%%%%%
The purpose of this section is to prove Theorem~\ref{thm:planarETRINV}.
\PlanarETRINV*

Our proof builds on the work of 
Dobbins, Kleist, Miltzow and Rz{\polhk{a}}{\.{z}}ewski~\cite{AreaUni}.
They showed
that \etrinv is \ER-complete even %for the case that 
when the variable-constraint incidence graph is planar.
We cannot simply start from their result, because we still need to eliminate equalities, and the obvious idea of replacing $x+y=z$ (for example) by two constraints of the form
$x+y \leq z$ and $x+y \geq z$ will destroy planarity of the   variable-constraint incidence graph. 
%\anna{Just making the explanation more succinct, but if you don't like it, then revert.}
%\till{I like it!}
\remove{We denote this as \planetrinvOld.
One might be tempted to believe that the following 
reduction from the \planetrinvOld shows 
\planetrinv is \ER-complete. Take an instance $I$ of
\planetrinvOld and replace every addition constraint 
$x+y =z$ by two constraints of the form
$x+y \leq z$ and $x+y \geq z$.
And the same for the inversion gadget. 
The problem with this idea is that the new variable-constraint graph is \emph{not} planar anymore.
%\till{I hope things are clear now.}
} %end remove

\begin{proof}
First note that \planetrinv is in \ER since it is expressible as \ETR{}.  %A: just shortening to one line
%    First note that \planetrinv is easily seen to
%    be contained in \ER as it is expressible as \ETR{}.

% \jyoti{replaced segments by polylines, $G$ by $G_I$.}
To prove \ER-hardness we reduce from \etrinv. 
For this purpose let $I$ be an instance of \etrinv.
As first step we replace every equality constraint
by the two corresponding inequality constraints.
For the next step,
let $G_I$ be the variable-constraint graph of $I$.
Let $D$ be a drawing of $G_I$ in the plane with edges
\changed{drawn as straight line segments}
%represented by segments 
%\anna{We need a drawing with a bounded number of crossings.  Using segments ensures $n^2$ crossings. Using polylines gives no bound.}
%polylines (i.e., polygonal chains) 
and vertices represented by points. 
We assume that no three segments %polylines 
cross in a common point.
This drawing may have crossing edges. 
We will add constraints and variables to $I$ to obtain %attain
a new instance $J$, which is equivalent to $I$ and such that the corresponding graph $G_J$ is planar. To compute $J$, we replace each crossing in $D$ by a `crossing gadget'. 
Since $D$ has at most a quadratic number of crossings, this construction takes polynomial time.
%This takes at most a quadratic number of steps as the degree of every vertex representing a constraint is at most three.

\begin{figure}[htbp]
    \centering
    \includegraphics[width=\textwidth]{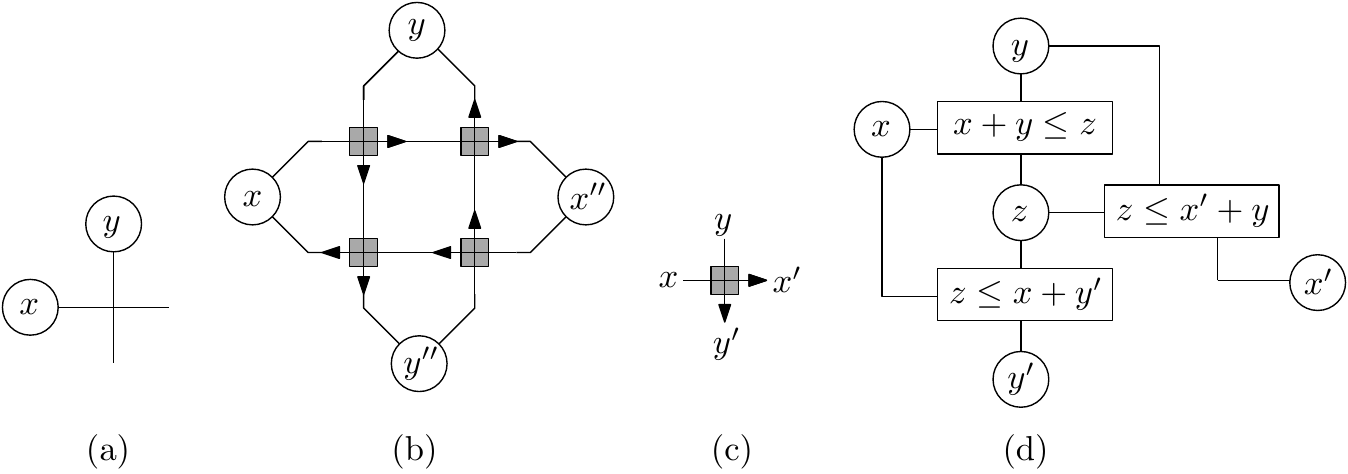}
    \caption{\changed{
    A crossing gadget:  (a) a crossing; (b) a crossing gadget composed of 4 half-crossing gadgets that enforces $x=x''$ and $y=y''$; (c) schematic of a half-crossing gadget; (d) detail of a half-crossing gadget that ensures $x\leq x'$ and $y\leq y'$.
    %Left: The Half-crossing gadget that overcomes a crossing and 
    %ensures $x\leq x'$ and $y\leq y'$. Middle: A symbolic 
    %representation of the half-crossing gadget to the left. 
    %Right: The final crossing gadget composed out of four copies 
    %of the half-crossing gadget to the left.
    } 
    % \till{Hey, I like the change, but it was slightly too large, so I made it so that it fit. @Anna: Please check that it is still good and I have not lost something important.}
    % \anna{One of my points was to make the labelling of x and y consistent. (c) is now wrong.  Also, for (d),  $x'$ should come off at the right, not bottom.} 
    % \till{I hope it is good now.}
    }
    \label{fig:halfCrossing}
\end{figure}
We first introduce a \emph{half-crossing gadget} that almost 
does the job, by using an idea of Dobbins et al.~\cite{AreaUni}.
 Figure~\ref{fig:halfCrossing}(d) illustrates this gadget. Note that the inequalities $x+y\le z$ and $z\le x'+y$ ensure that 
$x \leq x'$. Similarly, we can observe that $y\leq y'$. 

To enforce $x=x''$ and $y=y''$
we use four copies of the half-crossing gadget to build a \emph{crossing gadget}.
%, which actually enforces $x = x'$ and $y = y'$. 
See %the right side of %on the right of 
Figure~\ref{fig:halfCrossing}(b). 
%Four instances  of the half-crossing gadget, ensure that $x = x'$ and $y = y'$.
The top two %row  of 
half-crossing gadgets (i.e., the pair of gadgets lying on the $x$-monotone path from $x$ to $x''$) ensure $x \leq x''$. 
The bottom two 
%row of 
half-crossing gadgets ensure $x \geq x''$.
Together they ensure $x=x''$. The same works with $y$ and $y''$.
 
Since the variables of \etrinv are restricted to the range $[1/2,2]$, the variable $z$ will %must 
lie in the interval $[1,4]$.  
\changed{This is why our definition of \planetrinv involved a larger range for variables.
Accordingly, our final step is to loosen the range restriction of all variables to $[1/2,4]$.
For any new variable $z$ introduced in a half-crossing gadget, the larger range does no harm.
For any original variable $x$, we will enforce the added restriction that $x \le 2$ by adding further constraints.  In particular, add two new variables $a_x$ and $b_x$ and add the constraints $a_x=1$, $b_x=1$, $a_x + b_x \ge x$.  
Note that the variable-constraint incidence graph remains planar, since the new constraints and variables only connect to $x$ in the graph.}

\changed{The final constructed instance $J$ is equivalent to the original, has a planar variable-constraint incidence graph, and restricts all variables to the required range.
}
%We will show that it is sufficient to restrict all instances of \planetrinv to $[1/2,4]$.
% , except for the new variables  corresponding to the crossing gadget. These  new variables lie in $[1,4]$.
\remove{
For that purpose assume $A_I$ is an assignment of values
to each variable in $I$ that satisfies all constraints.
This gives us immediately an assignment $A_J$ of all
variables in the new instance $J$
in the range $[1/2,4]$ as required.
The other direction is more tricky. Assume that there
is an assignment $A_J$ of all variables of $J$ in the
required ranges $[1/2,4]$ of all variables 
in $J$ satisfying all constraints of $J$. 
This gives us an
assignment $A_I$ of all variables in $I$. However 
some variables might lie outside the range $[1/2,2]$.
Here a subtlety of the result of 
Abrahamsen et.al.~\cite{AbrahamsenAM18STOC} 
becomes crucial. They showed that for the instance
$I$ either there is no solution at all or there is
a solution in the range $[1/2,2]$.
This is of course not true for any instance, 
but for those instances that resulted from the 
reduction from Abrahamsen et.al.~\cite{AbrahamsenAM18STOC}.
Thus since assignment $A_I$ of the variables
that satisfies all the constraints there must 
be another 
assignment $A_I'$ with all variables in the 
correct range.
}
\qed
\end{proof}

% \till{The new variable z has the new range $[1/2,4]$. 
% So we need to mention this. }\jyoti{I see the new range for all variables is [1/2,4], but for $z$ is it [1,4]? Please fix the previous paragraph if needed.}

% \till{Things are a little subtle here. See \texttt{https://arxiv.org/pdf/1712.05142.pdf} Page 5/6 ; We need to mention that there either is a solution in the range [1/2,2] or no solution at all. Sorry that I missed that earlier.}
% \jyoti{Thanks for pointing that out, I understand it a little better now.}
% \till{OK, now I added the explanation, please check if it is clear.}

\section{Additional Details of Irrational Coordinates}\label{app:irrational}
%%%%%%%%%%%%%%%%%%%%%%%%%%%%%%%%%%%%%%%%%%%%%%%%%%%
%The purpose of this section is to prove Theorem~\ref{thm:IrrationalVertices}.
%\IrrationalVertices*
%In fact, the result follows from our proof of Theorem~\ref{thm:Main}, but it is interesting to have a simple explicit example. 
This section heavily relies on a paper by 
Abrahamsen et al.~\cite{abrahamsen2017irrational}.
We repeat the key ideas of their paper
and show how to adapt it for our purpose.
In their paper, they studied the \AGP. In the \AGP, we are given a polygon $P$ and a number $k$,
and we want to find a set of at most $k$ guards (points) that together see the entire polygon. We say a guard $g$ sees a point $p$ if the entire line-segment $gp$ is contained inside 
the polygon $P$. Abrahamsen et al. gave a simple polygon with integer coordinates such that there exists only one way to guard it optimally, with three guards. 
Those guards have irrational coordinates. See Figure~\ref{fig:irrationalGuards}, for a sketch of their polygon.

%\begin{figure}[htbp]
%    \centering
%    \includegraphics[width=.6\linewidth]{IrrationalGuards2}
%    \caption{A sketch of the polygon from Abrahamsen et al.    The three guards are indicated by black dots.     }
%\label{fig:irrationalGuards}
%\end{figure}
The key ingredients of their proof are as follows. 
First observe that the notches in the polygon boundary force there to be a guard on each of the three so-called \emph{guard segments}, indicated by the dashed lines. Then the left guard and the middle guard together must see the top left pocket edge  and bottom left pocket edge (shown in blue).  Similarly the middle guard and the right guard together must see the top right pocket edge and bottom right pocket edge. Abrahamsen et al.~specify precise coordinates for the polygon that force unique positions for the guards, and such that those positions have irrational coordinates.   As shown in Figure~\ref{fig:irrationalGuards}, the unique guard positions result in a single point on each pocket edge that is seen by two guards.  For example, point $a$ is the only point on the top left pocket edge that is seen by $x$ and $y$.  In particular, the line segments $xa$ and $ya$ pass through reflex corners of the polygon.

We adopt this example as follows, see
Figure~\ref{fig:irrationalVertices}.
Instead of guard segments we use \emph{variable segments} (shown in green), and instead of guards we use vertices.
 We describe variable segments in detail in Section~\ref{sec:MainER},
see also Figure~\ref{fig:segment-end}. 
By placing notches in the polygon boundary with fixed vertices of the graph in the notches, we can force there to be a vertex on each variable segment.
The pocket edges of the previous example become variable segments. 
The middle variable segment, the one that contains vertex $y'$ in the figure, is determined by a hole in the region. 
We need to use a hole in order to keep our graph drawing planar.  Note that, 
besides the fixed vertices lying on the boundary of the region, our graph now has 7 vertices, which are forced to lie on 7 distinct variable segments. 
To complete the construction of our graph, we add edges between the 7 vertices to create two cycles: 
one containing the leftmost four vertices and the other containing the rightmost four vertices, as illustrated by the dotted lines in Figure~\ref{fig:irrationalVertices}. 
%\begin{figure}[ht]
%    \centering
%    \includegraphics[width=.8\textwidth]{IrrationalVertices2}
%    \caption{Left: An instance of \DGP based on Figure~\ref{fig:irrationalGuards} that requires vertices at irrational coordinates. The notches at the ends of the green segments are not shown. Right: The graph, with small dots indicating the fixed vertices.
%    }
%    \label{fig:irrationalVertices}
%\end{figure}

Now the constraints on the three vertices $x'$, $y'$ and $z'$, shown with black dots in Figure~\ref{fig:irrationalVertices}, are exactly the same as for the guards $x$, $y$, and $z$ in Figure~\ref{fig:irrationalGuards}. 
All that changes is how the constraints are described. 
%All that changed the language it is formulated in. 
Let us give an explicit example. 
The guards $x$ and $y$ together need to see the top left pocket edge. In our new polygon, the vertices $x'$ and $y'$ must both be  adjacent to the same vertex $a'$, as indicated  in Figure~\ref{fig:irrationalVertices}.
This imposes the same constraints on the vertices $x',y'$ as was imposed on the guards $x,y$. This translation of conditions happens in the same way for all the other pockets.

As there exists only one position to guard Abrahamsen et al.'s  polygon with three guards, there exists also only one way to place the vertices in the polygon of 
Figure~\ref{fig:irrationalVertices}, and those positions are irrational.

\section{Additional Details for \ER-completeness}
\label{sec:addition}

\begin{proof}[Proof of Lemma~\ref{lem:addition}]
\anna{Can we make the word "proof" appear?}

    This proof is inspired by the following thought experiment.
    Assuming that we choose $z$ always to be the maximum 
    possible value. Furthermore we assume that while we 
    fix the position of $y$, we move $x$ some distance $d$ to the
    left. What we would expect is that $z$ moves by the same distance
    to the left. Actually, showing the last
    statement also proves the lemma, due to symmetry of $x$ and $y$. 
    We denote by $\ell$ the line that contains the variable segments
    of $x$ and $y$. We denote by $t$ half the distance that
    $z$ moves. Note that $t$ has a geometric interpretation
    as indicated in Figure~\ref{fig:AdditionProof}.
    We need to show $d = 2t$. 
    The lengths $A,A',B,B'$ are defined, by Figure~\ref{fig:AdditionProof}.
    Note that $B' = 2B$, because $\|a-b\| = \|b-c\| $.
    Similarly, follows $A'= 2A$.
    The lemma follows from
%    \[d = B'-A'=2(B-A) = 2t. \qedhere\]
    \[d = B'-A'=2(B-A) = 2t.\]
\qed
\end{proof}
\begin{figure}[htb]
    \centering
    \includegraphics[width=.8\textwidth]{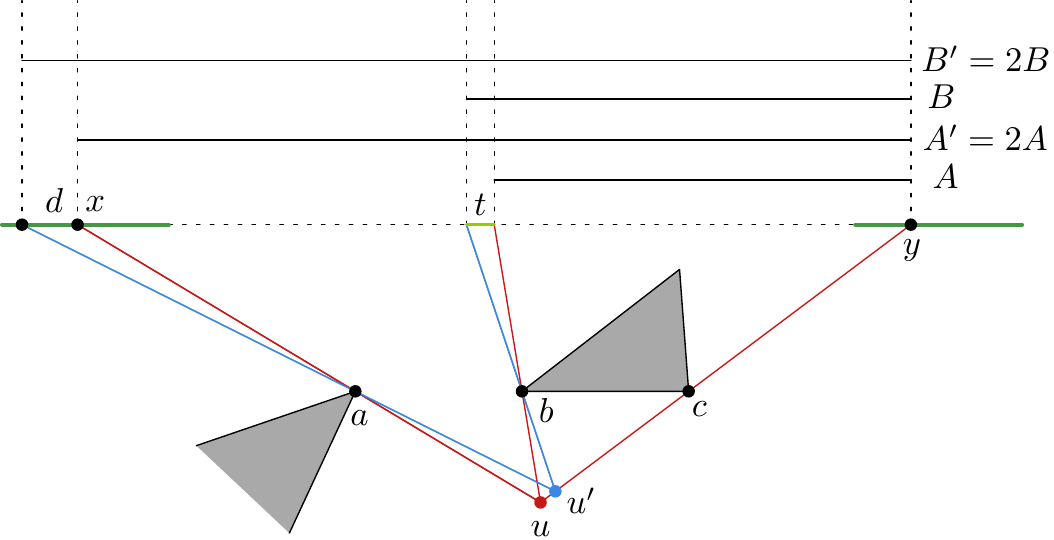}
    \caption{An illustration of the correctness 
    of the addition gadget.}
    \label{fig:AdditionProof}
\end{figure}

\begin{figure}[pt]
    \centering
    \includegraphics[width=.9\textwidth]
    {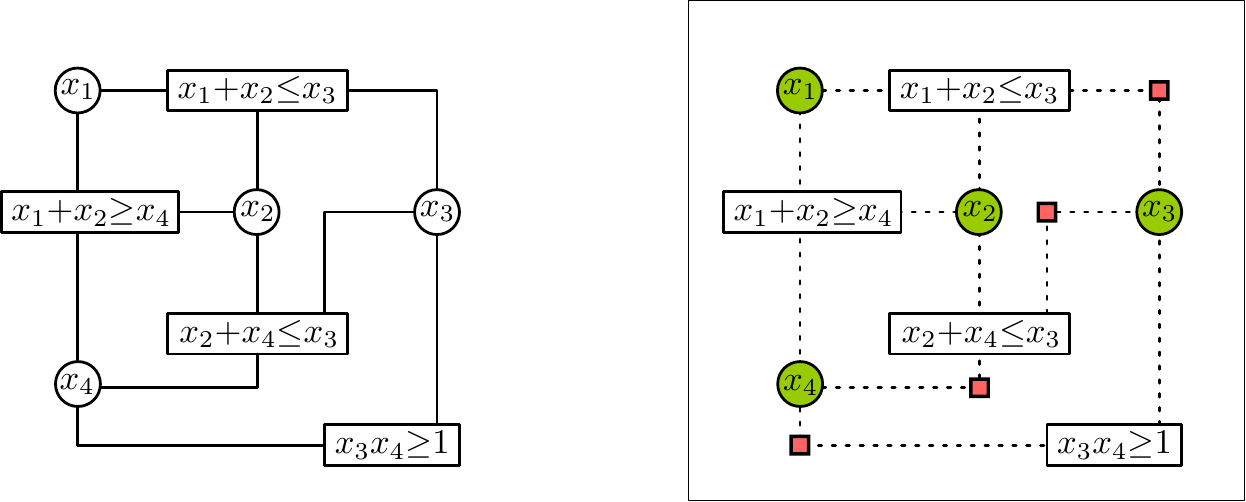}
    \caption{
    \changed{The overall reduction from \planetrinv to \DGP. }
    Left: A rectilinear drawing $D$ of the variable-constraint graph of $I$.
    In this example every variable vertex has degree at most $3$.
    Right:  A schematic representation of the resulting instance. The variable-segments are placed inside
    the variable vertices (green). Every edge is replaced by
    a sequence of copy gadgets (dotted lines) 
    and turn gadgets (red squares). 
    The constraint vertices are replaced by the corresponding
    constraint gadgets. Note that it might be necessary to have one or several turn and copy gadgets as part of the 
    addition and inversion gadgets.
    }
    \label{fig:reduction}
\end{figure}

\end{document}